\def\ps@headings{%
\def\@oddhead{\mbox{}\scriptsize\rightmark \hfil \thepage}%
\def\@evenhead{\scriptsize\thepage \hfil \leftmark\mbox{}}%
\def\@oddfoot{}%
\def\@evenfoot{}}
\newtheorem{definition}{Definition}
\newtheorem{theorem}{Theorem}
\newtheorem{lemma}{Lemma}
\theoremstyle{remark}
\newtheorem{remark}{Remark}
\begin{document}
\newcommand{\sj}[1]{\textcolor{black}{\textbf{SJ:} #1}}
\newcommand{\howard}[1]{\textcolor{black}{#1}}

\newcommand{\norm}[1]{\left\| #1 \right\|_1}
\newcommand{\size}[1]{\left| #1 \right|}
\newcommand{\abs}[1]{\left| #1 \right|}
\newcommand{\entropy}[1]{H\left( #1 \right)}
\newcommand{\condentropy}[2]{H\left( {#1} | {#2} \right)}
\newcommand{\mutual}[2]{I\left( {#1} ; {#2} \right)}
\newcommand{\condmutual}[3]{I\left( {#1} ; {#2} | {#3} \right)}

\newcommand{\graph}{\mathcal{G}}
\newcommand{\vertex}{\mathcal{V}}
\newcommand{\edge}{\mathcal{E}}
\newcommand{\src}{s}
\newcommand{\term}{t}
\newcommand{\inode}{\overline{\mathcal{V}}}
\newcommand{\Pset}{\mathcal{P}}
\newcommand{\Mset}{\mathcal{M}}
\newcommand{\rate}{R}
\newcommand{\const}{k}
\newcommand{\blk}{n}
\newcommand{\oblk}{N}
\newcommand{\Enc}{Enc}
\newcommand{\Dec}{Dec}
\newcommand{\In}[1]{\edge_{in}(#1)}
\newcommand{\Out}[1]{\edge_{out}(#1)}

\newcommand{\Cp}{C}
\newcommand{\Cpn}{C_{node}}
\newcommand{\Fq}{\mathbb{F}_q}

\newcommand{\dg}{\Gamma}
\newcommand{\Flow}{F}

\newcommand{\Oh}{\mathcal{O}}

\newcommand{\Zn}{\mathcal{Z}}
\newcommand{\Msg}{\mathbf{M}}
\newcommand{\Key}{\mathbf{K}}
\newcommand{\msg}{m}
\newcommand{\key}{k}
\newcommand{\code}{x}
\newcommand{\vanmtrx}{\mathbf{V}}
\newcommand{\rk}{r}

\newcommand{\vm}{\mathbf{m}}
\newcommand{\vk}{\mathbf{k}}
\newcommand{\x}{\mathbf{x}}
\newcommand{\X}{\mathbf{X}}
\newcommand{\Y}{\mathbf{Y}}
\newcommand{\Z}{\mathbf{Z}}
\newcommand{\Pvar}{\mathbf{P}}
\newcommand{\T}{\mathbf{T}}
\newcommand{\U}{\mathbf{U}}
\newcommand{\D}{\mathbf{D}}

\newcommand{\Code}{\mathcal{C}}
\newcommand{\Act}{\mathcal{A}}

\newcommand{\linktapper}{z_{link}}
\newcommand{\nodetapper}{z_{node}}

\newcommand{\ncut}{\mathcal{N}}

\title{Routing for Security in Networks with Adversarial Nodes}


\author{{Pak Hou Che}$^{\star}$, {Minghua Chen}$^{\star}$, {Tracey Ho}$^{\dagger}$,
{Sidharth Jaggi}$^{\star}$, and {Michael Langberg}$^{\ddagger}$ \vspace{2mm} \\
$^{\star}$Department of Information Engineering, The Chinese University
of Hong Kong\\
$^{\dagger}$Department of Electrical Engineering,
California Institute of Technology \\
$^{\ddagger}$Department of Mathematics and Computer Science, The Open University of Israel}

\maketitle

\begin{abstract}
\footnote{The authors are listed in alphabetical order.}
We consider the problem of secure unicast transmission between two nodes in a directed graph, where an adversary eavesdrops/jams a subset of nodes. This adversarial setting is in contrast to traditional ones where the adversary controls a subset of links. In particular, we study, in the main, the class of routing-only schemes (as opposed to those allowing coding inside the network). Routing-only schemes usually have low implementation complexity, yet a characterization of the rates achievable by such schemes was open prior to this work. We first propose an LP based solution for secure communication against eavesdropping, and show that it is information-theoretically rate-optimal among all routing-only schemes. The idea behind our design is to balance information flow in the network so that no subset of nodes observe ``too much'' information. Interestingly, we show that the rates achieved by our routing-only scheme are always at least as good as, and sometimes better, than those achieved by ``na\"ive'' network coding schemes ({\it i.e.} the rate-optimal scheme designed for the traditional scenario where the adversary controls links in a network rather than nodes.) We also demonstrate non-trivial network coding schemes that achieve rates at least as high as (and again sometimes better than) those achieved by our routing schemes, but leave open the question of characterizing the optimal rate-region of the problem under all possible coding schemes. We then extend these routing-only schemes to the adversarial node-jamming scenarios and show similar results. During the journey of our investigation, we also develop a new technique that has the potential to derive non-trivial bounds for general secure-communication schemes.
\end{abstract}
\IEEEpeerreviewmaketitle

\section{Introduction}

The secure network coding problem, introduced by Cai and Yeung \cite{cai02secure}, considers communication of a secret message in the presence of a computationally-unlimited adversary that eavesdrops on a limited but unknown portion of the network.  Most existing work in the literature concerns the multicast uniform link-based adversary case, where all links have equal capacity and the adversary can eavesdrop on a limited number of links. In this case, the maximum secure rate achievable when only the source generates randomness has a simple cut-set characterization~\cite{cai02secure}, and is achieved by a number of existing coding schemes, {\it e.g.}~\cite{feldman04capacity,silva11universal,elrouayheb12secure}.

In this paper we consider the node-based adversary case, where a computationally-unlimited adversary can eavesdrop on a limited number of nodes.  Much less is known about this problem. Motivated by complexity considerations, we focus on the class of routing-only schemes for unicast, in which only the source performs coding while non-source nodes perform routing. We formulate a linear program (LP) that balances the amount of information flowing through any subset of nodes, and show that its solution, which  involves only simple forwarding, achieves the optimal capacity within the class of routing-only schemes.  This class includes schemes involving replication (transmitting multiple copies of a received packet); our result shows that such replication does not improve rate. We further show that our LP-based routing-only schemes achieve rates that are always at least and sometimes higher than rates achieved by na\"ive application of secure network coding schemes designed for the uniform link-adversary case.
Related work by Cui {\it et al.}~\cite{cui10-1} considers the link-based secrecy problem with unequal link capacities and/or restricted eavesdropping sets, and give some achievable coding schemes where random keys may be injected or canceled at intermediate nodes. We apply these approaches to the node-based eavesdropping problem and show that they can sometimes achieve higher rates than our routing-only schemes, though at the expense of higher complexity.

We further extend our routing-only schemes to the problem of coding against a node-based jamming adversary that can introduce arbitrary errors at nodes under his control. The problem of network error correction coding against a jamming adversary was introduced by Yeung and Cai~\cite{yeung06network,cai06network}.   Like the eavesdropping problem, network error correction for the multicast uniform link-based adversary case has been extensively studied, with various existing capacity-achieving code constructions {\it e.g.}~\cite{cai06network,jaggi_resilient_08,RankMetricRandCodes}, while much less is known about the node-based adversary case. Similarly, we show that our routing-only schemes, obtained using the same LP formulation, achieve rates that are never lower and sometimes higher compared to that achieved by na\"ive application of network error correction codes designed for the uniform link-adversary case. However, unlike the eavesdropping case,  we show that replication can improve rate in the jamming case. Kosut {\it et al.}~\cite{kosut09} also consider node-based jamming adversaries, and introduce non-linear network codes called ``polytope codes" in which intermediate nodes carry out comparison and signaling operations. These codes can sometimes achieve higher rates than routing-only schemes, but are more complex.

{
One ``natural'' restriction we consider in the jamming scenario, in contrast to most work in the network error-correction literature, is that the adversary is ``causal''. That is, his jamming actions cannot be based on future transmissions on the network. Under this reasonable assumption, we note that the power of the adversary is significantly weakened compared to the ``non-causal'' scenario. Specifically, we show that ideas in~\cite{jaggilangberg05isit} lead to code designs in which the same rates can be achieved against a {\it causal omniscient} adversary (one who can see all causal transmissions in the network, and base his jamming strategy as a function of these observations), as are achieved by our schemes against a {\it localized} adversary (one who can see only see transmissions on edges incoming to him, and base his jamming strategy as a function of these observations).}

%
%

\subsection{Notational Conventions}

Calligraphic symbols such as $\mathcal{N}$ will denote sets. Boldface symbols such as $\mathbf{x}$ will denote vectors, boldface upper-case symbols such as $\mathbf{X}$ will denote random variables, non-boldface lower-case symbols such as $x$ will denote particular instantiations of those random variables and non-boldface upper-case symbols such as $X$ will denote matrices.

\section{Model}

\subsection{Network Model}

Let a graph $\graph = (\vertex, \edge)$, where $\vertex$ is the vertex set, and $\edge$ is the edge set. There are two pre-specified nodes in $\vertex$ -- specifically $\src$ denotes the {\it source node}, and $\term$ denotes the {\it terminal node}. For notational convenience, we denote by $\inode$ the set of {\it internal nodes} $\vertex \setminus \{ \src,\term \}$, {\it i.e.}, the subset of nodes of $\vertex$ excluding the source and terminal nodes. As is common in the network coding literature~\citep{KoeM_03}, we assume each edge has unit capacity.\footnote{In the node-adversary case this unit-capacity assumption is without loss of generality (not so in the case when the adversary controls edges -- {see, for instance,~\cite{cui10-1}}).} For any nodes $v \in \inode$, let $\In{v}$ denote the {\it set of incoming edges} of node $v$ and $\Out{v}$ denote the {\it set of outgoing edges} of node $v$. We also define $\In{\mathcal{A}}$ and $\Out{\mathcal{A}}$ be the set of incoming and outgoing edges of the nodes $v \in \mathcal{A}$ respectively. For directed edge $e =(v,v') \in \edge$, let $head(e)$ denote the head node of the edge $e$, {\it i.e.}, $head(e)=v'$, and $tail(e)$ denote the tail node of the edge $e$, {\it i.e.}, $tail(e)=v$. The {\it min-cut of the network between the source $\src$ and the terminal $\term$} is denoted by $\Cp$.


\subsection{Source Encoding}

A {\it packet} is defined as a length-$\blk$ vector in the field $\Fq$. Here the {\it field-size} $q$, the {\it number of packets in a generation} $\oblk$, the {\it rate} $\rate$, the {\it redundancy} $\delta$, and the {\it key rate} $\rk$ are code-design parameters to be specified later. We also define $\tau$ to be the {\it generation length}, which satisfies $\oblk \leq \tau \Cp$, {\it i.e.}, the number of packets in a generation is at most the generation length times the min-cut. A visual presentation of these parameters are given in Figure~\ref{fig:illustration}. The source $\src$ has a {\it message} $\Msg$ drawn arbitrarily from the set $\{1,2,\ldots,q^{\rate \oblk \blk (1 - \delta)}\}$, and a random variable {\it key} $\Key$ distributed uniformly from the set $\{1,2,\ldots,q^{\rk \oblk \blk (1 - \delta)}\}$. The source $\src$ then encodes the message $\Msg$ and the key $\Key$ by the {\it source encoder} $\Enc(\src)$, and generates $\oblk \blk$ symbols over $\Fq$, {\it i.e.}, $\Enc: \{1,2,\ldots,q^{\rate \oblk \blk (1 - \delta)}\} \times \{1,2,\ldots,q^{\rk \oblk \blk (1 - \delta)}\} \rightarrow \{1,2,\ldots,q^{(\rate + \rk) \oblk \blk}\}$.

\begin{figure}
  \centering
  \includegraphics[width=1\columnwidth]{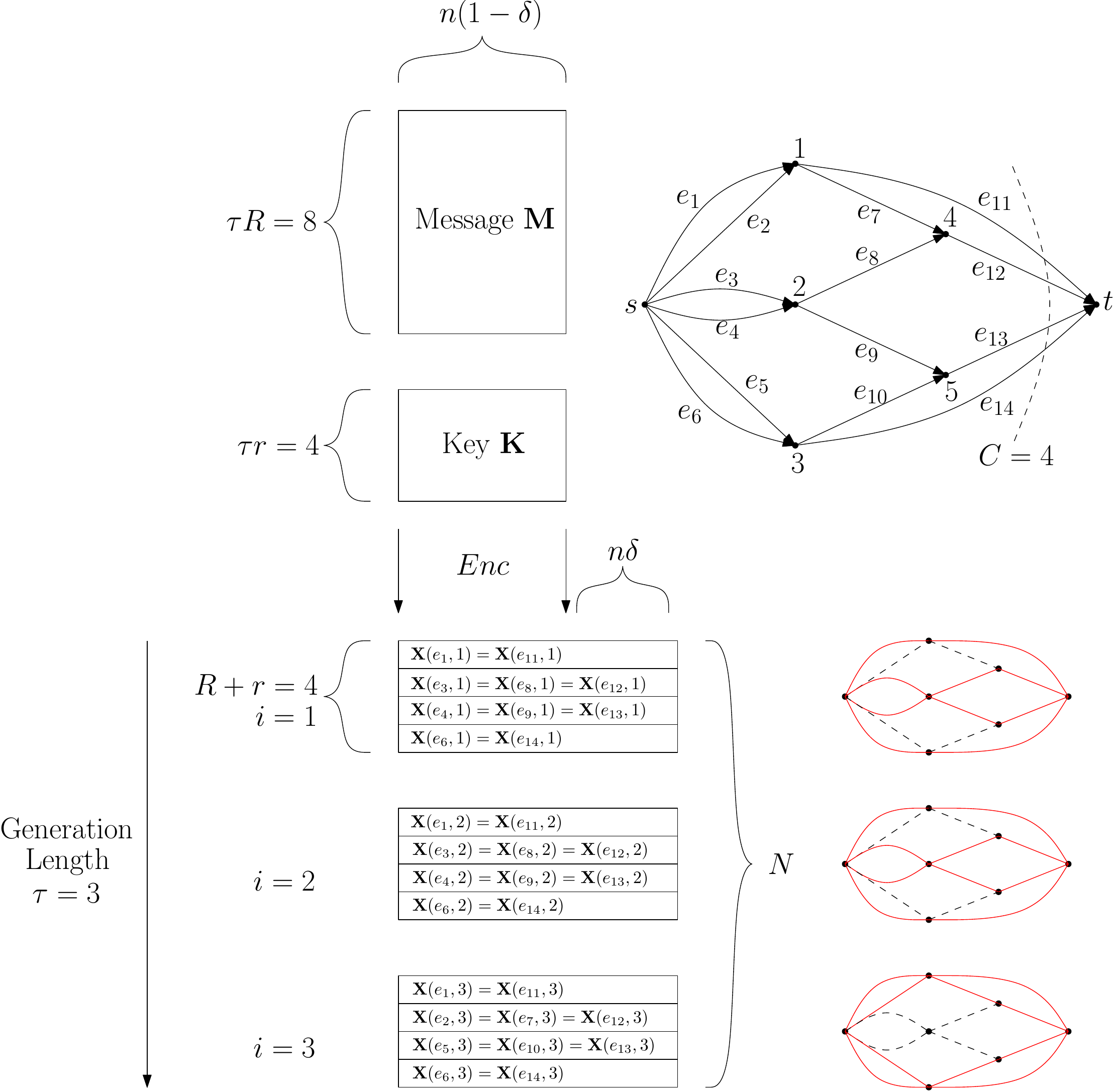}
  \caption{{\bf Illustrating example for our code parameters:} The source $\src$ wishes to transmit a messge $\Msg$ to the terminal $\term$ over a network $\graph = (\edge,\vertex)$ with min-cut $\Cp$ (in this example $\Cp = 4$), specifically the so-called ``cockroach network'' example first described in~\cite{kosut09}, and replicated on the upper right of this figure. To this end, it first organizes $\Msg$ into $\tau\rate = 8$ packets (in this example, the generation length $\tau = 3$, and the rate $\rate=8/3$), each containing $\blk(1-\delta)$ symbols over $\Fq$. It then generates a uniformly random key $\Key$ which it organizes into $\tau\rk$ packets (in this example $\rk = 4/3$), each containing $\blk(1-\delta)$ symbols over $\Fq$. Next, the source uses $\Enc$ to encode $\Msg$ and $\Key$ into $\oblk$ packets (in this example $\oblk = 12$), each containing $\blk$ symbols over $\Fq$. In each coding instant $i$ within the generation of length $\tau$ the source then injects at most $\Cp$ of these packets into the network (in this example $i \in \{1,2,3\}$, the outputs of the encoder are denoted $\X(e,i)$, for appropriate $e$ and $i$, and routed over the network according to the red paths denoted in the three figures on the right). Finally, the terminal uses $\Dec$ to decode $\Msg$ as $\hat{\Msg}$. The set of all node encoders, along with the decoder, together comprise the code $\Code$.}
  \label{fig:illustration}
\end{figure}

\subsection{Linear Network Encoding}

\footnote{In some models, non-linear coding outperforms linear coding \citep{kosut09}. For complexity reasons, we restrict our attention to linear codes.}There are three types of nodes in the network -- ``uncorrupted nodes'', ``eavesdropping nodes'', and ``jamming nodes". Nodes in the first category are entirely honest, perform the encoding operations specified in this section, and do not aim to eavesdrop on communications. Nodes in the second category also perform the encoding operations specified in this section, but in addition attempt to eavesdrop on communication as specified in Section~\ref{sec:model} 1a. Nodes in the third category do not perform the encoding operations specified in this Section (their ``jamming'' is described in Section~\ref{sec:model} 1b and 2a), and in addition also attempt to eavesdrop on communications. We shall call nodes in either of the first two categories ``non-jamming''.

The random variable $\X(e,i)$ denotes the packet on edge $e \in \edge$ at time $i \in \{1, \ldots ,\tau \}$. {For simplicity, we sometimes omit the time index, and use $\X(e)$ to denote the set of {\it all} packets going over an edge in a generation.} We also denote $\X(\edge',i)$ to be set of packets
$\{ e \in \edge': \X(e,i) \}$ at time $i \in \{1, \ldots ,\tau \}$, where $\edge' \subseteq \edge$.

Each non-jamming node in the network also has an encoder. As mentioned before, in this work we restrict the internal nodes in the network to ``simple" operations, specifically causal linear operations\footnote{In most of the network coding literature, we do not explicitly worry about causality, since a ``limited'' amount of non-causality can be simulated by pipelining (buffering at each node). {\bf However, in adversarial jamming problems the throughout against a causal adversary can be higher than against a noncausal adversary. In this work, this is indeed the case in the Omniscient Jammer model.
Hence we explicitly focus on causal adversaries.}} over $\Fq$. That is, the packets transmitted on each outgoing edge of a node $v$ are linear functions of the packets arriving on incoming edges of $v$.



We distinguish two types of network encoding schemes:

\noindent {\bf Routing schemes:} In a {\it routing scheme}, the set of packets leaving a node $v$ are subsets of packets incoming to that node. That is, any packet $\X(e,i)$ transmitted on an edge $e \in \Out{v}$ at time $i \in \{1,\ldots,\tau\}$ equals a packet $\X(e',j)$ transmitted on an edge $e' \in \In{v}$ at time $j \leq i$. Note that this includes ``replication'', {\it i.e.}, a node is allowed to transmit multiple copies of a packet it has observed.


\noindent {\bf Coding schemes:} In a {\it coding scheme}, the set of packets leaving a node $v$ are linear combinations of packets incoming to that node\footnote{In this model we disallow the possibility that an internal node in the network generates private randomness, and uses this to generate outgoing packets. It can be shown {(see~\cite{cui10-1}, and Figure~\ref{fig:int_rand} in Section~\ref{sec:beyroute})} that in fact such a strategy can sometimes increase the throughput of networks.}. These linear combinations can be of two types. In {\it scalar linear network coding schemes}, each outgoing packet corresponds to a causal linear combination (over $\Fq$) of the packets that $v$ has already observed. That is, for any packet $\X(e,i)$ with $tail(e) \in \inode$, we have
\begin{equation}
	\X(e,i) = \sum_{j \leq i} \sum_{e':head(e')=tail(e)} \beta(e',e,j) \X(e',j),
\end{equation}
\noindent where the linear network {\it coding coefficients} $\beta(e,e',j)$ are scalars from $\Fq$.

In {\it vector linear network coding schemes}, each symbol of each outgoing packet corresponds to a linear combination (over $\Fq$) of all the symbols of all the packets that $v$ has already observed. That is, for any packet $\X(e,i)$ with $tail(e) \in \inode$, we have
\begin{equation}
	\X(e,i) = \sum_{j \leq i} \sum_{e':head(e')=tail(e)} B(e',e,j) \X(e',i)
\end{equation}
\noindent where $B(e',e,j)$ are matrices in $\Fq^\blk \times \Fq^\blk$. In particular, if $B(e',e,j) = \beta(e',e,j) I$, it is a scalar linear network coding scheme.\footnote{Vector linear network coding schemes are more general than scalar linear network coding schemes -- see \cite{Jaggi:04Allerton}.
In general, all the achievability schemes we present in this paper are based on scalar linear network coding schemes. However, some of the non-achievability results we present work even for vector linear network coding schemes.}

For both these types of codes, the choice of coding coefficients is part of the code design, and is explicitly specified later in the various schemes we construct. In general they may be chosen either deterministically (as a function of $\graph$) or randomly\footnote{Each node chooses its linear network coding coefficients uniformly at random over $\Fq$, for instance~\cite{random_nc06}.}.  We define the {\it network code} $\Code$ to be a triple that contains source encoder $\Enc(\src)$, intermediate node encoders $\Enc(v)$ for all $v \in \inode$ and terminal decoder $\Dec(\term)$. That is, $\Code = (\Enc(\src), \Enc(\inode), \Dec(\term))$ -- here $\Enc(\inode)$ is $\Enc(v)$ where $v \in \inode$.

\subsection{Adversarial Models and Corresponding Communication Goals}
\label{sec:model}
We focus on two broad classes of adversarial models -- localized and omniscient adversaries, and their corresponding communication goals. Localized adversaries are usually considered as the adversaries in the wired model, omniscient adversaries are usually considered as the adversaries in the wireless model.

\noindent 1) {\bf Localized Adversaries:} An adversary is said be to {\it localized} if it only has a casual ``localized'' view of network traffic, depending on the nodes in $\Zn$ it controls. That is, a {\it localized adversary that observes $\Zn$} can observe the packets incoming to the set of nodes $\Zn$. Its ``attack strategy" can be a causal function of these observations (and also its knowledge of $\graph$ and $\Code$, and the terminal's decoding function, {as} defined below).

We consider three types of communication problems against localized adversaries:

\noindent a) {\bf Eavesdropping:} The {\it set of nodes eavesdropped by the adversary} $\Zn_E$ is a set of at most $z_E$ nodes in $\inode$, chosen by the adversary as a function of his knowledge of $\graph$ and $\Code$, prior to communication starting. That is, $\Zn_E \subseteq \inode: \graph \times \Code \rightarrow \Pset_{z_E}(\inode)$, where $\Pset_{z_E}(\inode)$ denotes the set of all subsets of $\inode$ of size less than or equal to $z_E$. Given this choice, at time $i$ the adversary observes packets $\X(\In{\Zn_E},j)$ with $j \leq i$, the information on edges incoming to nodes in $\Zn_E$ at time $j \leq i$. Given these packets, the adversary's estimate $\hat{\Msg}$ of $\Msg$ is allowed to be an arbitrary (possibly probabilistic) function of the packets he observes, the network $\graph$, and the network code $\Code$.
\noindent {\it Adversarial Communication Goals Against a Localized Eavesdropper:} Prior to the communication commencing, both $\Msg$ and $\Key$ are known only to the source $\src$ itself, and not to any other party. $\src$ wishes to transmit the message $\Msg$ to $\term$ over the network $\graph$, such that the secrecy and decodability requirements described in (\ref{eq:sec}) and (\ref{eq:dec}) in \ref{sec:dec} below are satisfied.

\noindent b) {\bf Jamming:} The {\it set of nodes jammed by the adversary} $\Zn_J$ is a set of at most $z_J$ nodes in $\inode$. Given this choice, at time $i$ the adversary can access $\X(\In{\Zn_J},j)$ with $j \leq i$. {Given the network $\graph$ and the network code $\Code$, he then corrupts the information of the outgoing links of $\Zn_J$, that he replaces $\X(\Out{\Zn_J},i)$ by $\widehat{\X}(\Out{\Zn_J},i)$ for all $i \in \{1,...,\tau\}$. The adversary's transmissions $\widehat{\X}(e,i)$ on edges $e$ outgoing from nodes in $\Zn_J$ are allowed to be arbitrary (possibly probabilistic) {\it casual} functions of the packets he observes, the network $\graph$, and the network code $\Code$.}
	
\noindent {\it Adversarial Communication Goals Against a Localized Jammer:} In this scenario, $\src$ wishes to transmit the message $\Msg$ to $\term$ over the network $\graph$, such that the decodability requirement described in (\ref{eq:dec}) is satisfied.


\noindent c) {\bf Eavesdropping and Jamming:} The set of nodes eavesdropped and jammed by the adversary $\Zn$ is a set of at most $z$ nodes in $\inode$. Given the network $\graph$ and the network code $\Code$, he corrupts the information of the outgoing links of $\Zn$ which is the same as the {\bf Localized Jamming} case. Furthermore, the source $\src$ also wishes the message is secure to the adversarial nodes $\Zn$ which has the same setting as the {\bf Localized Eavesdropping} case.

\noindent {\it Adversarial Communication Goals Against a Localized Eavesdropper/Jammer:} $\src$ wishes to transmit the message $\Msg$ to $\term$ over the network $\graph$, such that the secrecy and decodability requirements described in \howard{\eqref{eq:dec} and \eqref{eq:sec}} in \ref{sec:dec} are satisfied.

\noindent 2) {\bf Causal Omniscient Adversaries:} An adversary is said to be {\it causal omniscient} if it has a ``global but causal" view of the network traffic. That is, a {\it causal omniscient adversary} that observes all the information $\X(e,i)$ transmitted over every edge $e$ and all time $i$, though its jamming can only be a causal function in $i$.\footnote{In fact, a secrecy constraint does not make sense in the case of omniscient adversaries, since adversaries by definition know all transmissions in the entire network.} Its ``attack strategy" can be a causal function of these observations (and also its knowledge of $\graph$ and $\Code$).

\noindent a) {\bf Jamming:} Given the information transmitting over the network $\graph$, at time $i$ the adversary can access $\X(e,j)$ with $e \in \edge$ and $j \leq i$. The {\it set of nodes jammed by the adversary} $\Zn$ is a set of at most $z_J$ nodes in $\inode$. Given this and the network $\graph$, the network code $\Code$, he then corrupts the information of the outgoing links of $\Zn$, that is, replace $\X(\Out{\Zn},i)$ by $\widehat{\X}(\Out{\Zn},i)$.
		
\noindent {\it Adversarial Communication Goals Against a Omniscient Jammer:} In this case, $\src$ wishes to transmit the message $\Msg$ to $\term$ over the network $\graph$, such that the decodability requirement described in (\ref{eq:dec}) is satisfied.

\subsection{Terminal Decoding}
\label{sec:dec}
{
In each of the four adversarial models above, the communication goals always include the ``decodability" condition. Only the {\bf Localized Eavesdropping} and {\bf Localized Eavesdropping and Jamming} models also include the ``secrecy" condition. The former is defined in \ref{dec:decodability}, and the latter is defined in \ref{dec:secrecy} below.

\begin{enumerate}
	\item\label{dec:decodability} {\bf Decodability:} We define the {\it decoding function} of terminal $\term$ to be $\Dec$, where $\Dec : \{1,2, \ldots, q^{ (\rate+\rk) \oblk \blk } \} \rightarrow \{ 1,2, \ldots, q^{\rate \oblk \blk (1 - \delta)} \}$. Let $\widehat{\Msg} = \Dec(\Enc(\Msg))$ be the message that the terminal $\term$ decodes. The terminal $\term$ \howard{is} required to be able to decode the original message $\Msg$ with arbitrarily high probability. That is, we need

\begin{equation}
	\Pr_{\Act, \Code}(\widehat{\Msg} \neq \Msg) < \epsilon_1.
	\label{eq:dec}
\end{equation}
\noindent for arbitrarily small $\epsilon_1$.
	
	\item\label{dec:secrecy} {\bf Secrecy:} The source $\src$ transmits the message $\Msg$ with {\it $\Delta$-securely} to the terminal $\term$. {That is, we require the mutual information between the source's message and the adversary's estimate of it to be ``small'', that is,}

\begin{equation}
	\mutual{\Msg}{\X({\In{\Zn_E}})} \leq \Delta.\footnote{Intuitively, this inequality means that the communication scheme leaks at most $\Delta$ units of information.}
	\label{eq:sec}
\end{equation}
\noindent In particular, if $\Delta = 0$, we say the message $\Msg$ is {\it perfectly secure}.

\end{enumerate}
}

The {\it overall probability of error}\footnote{{These definitions are for {\it maximal} probability of error (over all messages $\Msg$) and hence also work {\it averaged} over $\Msg$. The converses we prove {\it also} work {\it averaged} over $\Msg$, and hence are also true for the {\it worst-case} $\Msg$.}} $\Pr_e$ of a transmission scheme can be separated into two parts. The {\it probability of decoding error} and the {\it probability of leakage}. {The probability of decoding error, denoted by $\epsilon_1$,  is $\Pr_{\Act, \Code}(\widehat{\Msg} \neq \Msg)$. The probability of leakage error, denoted by $\epsilon_2$, is defined as $\Pr_{\Act, \Code}(\mutual{\Msg}{\X(\In{\Zn})} > \Delta)$.}

\subsection{Code Parameters}
The rate $\rate = \frac{1}{\blk \oblk} \log_q \size{\Mset}$ is {\it achievable} if for any $\epsilon > 0$, there exists $\delta > 0$ such that there is a coding scheme with rate at least $\rate - \delta$ with the overall probability of error $P_e = \Pr_{\Act, \Code}(\widehat{\Msg} \neq \Msg) + \Pr_{\Act, \Code}(\mutual{\Msg}{\X(\In{\Zn})} > \Delta) < \epsilon$ for large enough $\blk \oblk$ and $q$.

\section{Preliminaries}

\subsection{Routing Linear Program}
We first introduce the linear program that gives us a baseline routing scheme for each of the four models above.

{
Let $\Pset$ be the set of all paths from $\src$ to $\term$. For path $p \in \Pset$, a natural internal variable in the {\bf Linear Program 1} (defined in Equations~\eqref{lp1:max} -- \eqref{lp1:balance}) is the {\it flow through path $p$}, denoted by $F(p)$.

\howard{
\noindent {\bf Linear Program 1}
\begin{alignat}{2}
F(z)  = \mbox{ } &	\text{max }	\quad  \sum_{p \in \Pset} \Flow(p) - \lambda(z) \label{lp1:max},\\
	 &  \text{subject to }	\quad  \forall e \in \edge, & & \sum_{p:p \ni e} \Flow(p) \leq 1, \label{lp1:link}\\
					 	\quad & \forall \Zn \subset \inode, | \Zn | \leq z, & & \sum_{p:|p \cap \Zn|>0} \Flow(p) \leq \lambda(z). \label{lp1:balance}
\end{alignat}}

In LP1, the maximum value of the objective function in \eqref{lp1:max} is denoted by $F(z)$. Equation~\eqref{lp1:link} says that the flows passing through a link are bounded by its capacity (which equals $1$). Equation~\eqref{lp1:balance} bounds the flow through any set of nodes with $| \Zn | \leq z$. This flow is bounded from above by $\lambda(z)$ -- the LP attempts to ensure that not {\it too much} flow passes through any set of $z$ nodes, while simultaneously maximizing the overall flow. Here, $\lambda(z)$ is also a variable of LP1. The choice of rate $\rate$ and key-rate $\rk$ for each of our routing scheme depends critically on $\lambda(z)$.

\begin{lemma}
\label{LP1}
If the optimal solution for LP1 with $\sum_{p \in \Pset} \Flow(p) < \Cp$. Then, there is another optimal solution satisfies $\sum_{p \in \Pset} \Flow(p) = \Cp$.
\end{lemma}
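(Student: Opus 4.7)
The plan is to take an optimal $(F^*, \lambda^*)$ with $\phi^* = \sum_{p \in \Pset} F^*(p) < C$ and augment it until the total flow equals $C$, while preserving the LP1 objective. Because $C$ equals the $\src$-$\term$ min-cut of $\graph$, the max-flow is $C$, so $\phi^* < C$ guarantees an augmenting path $P$ in the residual graph of $F^*$ with some positive bottleneck $\epsilon$. Augmenting $F^*$ along $P$ by $\epsilon$ and re-decomposing the resulting edge-flow into paths yields $F^{new}$ with $\sum_p F^{new}(p) = \phi^* + \epsilon$; I set $\lambda^{new} = \max_{|\Zn| \leq z} \sum_{p:\, p \cap \Zn \neq \emptyset} F^{new}(p)$. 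The lower bound $\lambda^{new} \geq \lambda^* + \epsilon$ is immediate from optimality of $(F^*, \lambda^*)$, since otherwise $(F^{new}, \lambda^{new})$ would have strictly larger LP1 objective $(\phi^* + \epsilon) - \lambda^{new} > \phi^* - \lambda^*$. For the matching upper bound, the easy case is when $P$ is a purely forward augmenting path: then each $\sum_{p:\, p \cap \Zn \neq \emptyset} F^{new}(p)$ exceeds its old value by $\epsilon$ (if $P$ meets $\Zn$) or $0$, so $\lambda^{new} \leq \lambda^* + \epsilon$. The two bounds pinch to equality, the objective is preserved, and iterating the argument (or augmenting by the full $C - \phi^*$ at once) drives $\sum_p F(p)$ up to $C$.

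The main obstacle is the back-edge case. If the residual augmenting path $P$ uses back edges, augmenting reroutes pieces of existing $F^*$-paths, and a naive re-decomposition can make some individual $\sum_{p:\, p \cap \Zn \neq \emptyset} F^{new}(p)$ exceed its old value by more than $\epsilon$. I would handle this either combinatorially, by re-decomposing $F^*$ (using the freedom that many path decompositions realize the same edge-flow) so that the only $\Zn$'s whose path-flow grows beyond $\lambda^* + \epsilon$ after augmenting are ones that were already binding in $F^*$, or more cleanly via LP duality on LP1: at the optimum with $\lambda^* > 0$, complementary slackness on the $\lambda$ variable forces $\sum_{|\Zn| \leq z} \beta^*_\Zn = 1$, and parametric sensitivity of the inner LP (maximize $\sum_p F(p)$ subject to edge capacities and $\sum_{p:\, p \cap \Zn \neq \emptyset} F(p) \leq \lambda$) then shows that the minimum $\lambda$ supporting any flow value $\phi \in [\phi^*, C]$ grows with unit slope in $\phi$, yielding the desired upper bound on $\lambda^{new}$ directly.
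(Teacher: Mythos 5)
Your outline is essentially the same as the paper's: start from an optimal $(F_0,\lambda_0)$ with $\sum_p F_0(p)=\phi^*<\Cp$, push the total flow up to $\Cp$, and argue that $\lambda$ grows by at most the added flow, so the LP1 objective cannot drop. The paper phrases this as ``inject $\Flow_{in}=\Cp-\phi^*$ and $\lambda_{in}\le\Flow_{in}$'' without further justification. You correctly identify the one place where this needs care, which the paper glosses over: when the residual augmenting path uses back-edges, the augmentation reroutes existing path flow, and the quantity $\sum_{p:\,p\cap\Zn\neq\emptyset}F(p)$ is decomposition-dependent for $|\Zn|\ge 2$ (unlike the single-node flow, which is fixed by the edge flow). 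Concretely, the naive splice re-decomposition -- cutting a path $p_i$ at the reversed edge and cross-joining it with the prefix and suffix of the augmenting path $P$ -- can raise $\sum_{p:\,p\cap\Zn\neq\emptyset}F(p)$ by $2\epsilon$ for a set $\Zn$ that both spliced pieces hit while $p_i$ itself avoids $\Zn$. So your worry is not hypothetical, and the forward-only case you handle cleanly does not cover the general step.

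What you have not done is close this gap. Your ``combinatorial re-decomposition'' route is only sketched, and it is not obvious that one can always choose a decomposition of the augmented edge-flow in which every $\Zn$-constraint rises by at most $\epsilon$. The LP-duality route is a more promising direction -- the dual constraint $\sum_\Zn \beta_\Zn \le 1$ together with complementary slackness does say the sensitivity of $\lambda$ to a unit increase in forced total flow is at most $1$, which is exactly the ``unit slope'' you want -- but you state it as a one-liner and do not carry out the parametric argument. In short: same strategy as the paper, with the correct diagnosis that the asserted inequality $\lambda_{in}\le\Flow_{in}$ is the crux and is not automatic when rerouting occurs; a complete proof still requires you to actually establish that bound by one of your two routes.
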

\begin{proof}
Suppose the optimal solution of LP1 is $((\forall p \in \Pset, \Flow_0(p)), \lambda_0)$ such that the sum of all flows $\sum_{p \in \Pset} \Flow_0(p) < \Cp$ and let $\Flow_0 = \sum_{p \in \Pset} \Flow_0(p)$. So, the optimal objective function is $\Flow_0 - \lambda_0$. Note that in this network, we can still inject $\Flow_{in} = \Cp - \Flow_0$ fraction of flows into the network since the sum of all flows $\Flow_0 < \Cp$. Then, we have the sum of all flows $\sum_{p \in \Pset} \Flow'(p) = \Cp$. Denote the increment of $\lambda_0$ after the injection of $\Flow_{in}$ to be $\lambda_{in}$, we have $\lambda_{in} \leq \Flow_{in}$. Also, we have $\forall \Zn \subset \inode, \mbox{ } \sum_{p:p \ni v_i, i \in \Zn} \Flow'(p) \leq \lambda_0 + \lambda_{in}$, where $\lambda_{in} \leq F_{in}$. This means, the increment of the flows that passing through $\Zn$ is $\lambda_{in}$. So, the objective function after the flow injection is $\Cp - (\lambda_0 + \lambda_{in}) \geq \Cp - (\lambda_0 + F_{in}) = F - \lambda_0$. Since $F - \lambda_0$ is optimal, and so as $\Cp - (\lambda_0 + \lambda_{in})$.
\end{proof}
By Lemma~\ref{LP1}, LP1 can be reduced into the following linear program.
\noindent {\bf Linear Program 1'}
\begin{alignat*}{2}
	\text{max }	\quad & \Cp - \lambda(z) \\
	\text{subject to }	\quad & \forall e \in \edge, & & \sum_{p:p \ni e} \Flow(p) \leq 1 \\
						\quad & \forall \Zn \subset \inode, & & \sum_{p:|p \cap \Zn|>0} \Flow(p) \leq \lambda(z) \\
						\quad & \sum_{p \in \Pset} \Flow(p) = \Cp
\end{alignat*}
Note that the size of $\Pset$ is exponential to the network size, that means, there are exponential number of variables. In order to reduce the complexity of solving the linear program, we then consider the following linear program which is equivalent to LP1'. So, we use the standard form of linear program as max-flow min-cut theorem. That is, instead of using the flow on the paths $\Flow(p)$ where $p \in \Pset$ as the variables, we use the flow on the edges $\Flow(e)$ where $e \in \edge$ to be the variables in the following linear program.

\noindent {\bf Linear Program 2}
\begin{alignat*}{2}
	\text{max }	\quad & \Cp - \lambda(z) \\
	\text{subject to }	\quad & \forall v \in \inode, \sum_{e: e \in \In{v}}\Flow(e) = \sum_{e: e \in \Out{v}} \Flow(e) \\
						\quad & \forall \Zn \subset \inode, \sum_{e: e \in \In{v}, v \in \Zn}\Flow(e) \leq \lambda(z) \\
						\quad & \sum_{e: e \in \Out{\src}}\Flow(e) = \sum_{e: e \in \In{\term}}\Flow(e) = \Cp
\end{alignat*}

\section{Main Results}

\begin{figure*}
    \begin{tabular}{ | l | c | c | c | c |}
    \hline
      & A. Eavesdropper & B. Localized  jammer & C. Localized eavesdropper/jammer & D. Omniscient jammer \\ \hline \hline
   1.1 Na\"ive coding& $\Cp - \dg_{in}(\Zn)$ \cite{cai02secure} & $\Cp-\dg_{out}(\Zn)$ \citep{jaggilangberg05isit} & $\Cp-\dg_{in}(\Zn)-\dg_{out}(\Zn)$ \cite{yao10netcod} & $\Cp -\dg_{out}(\Zn)$ \citep{jaggilangberg05isit} \\
    & & if $\dg_{out}(\Zn)<{\Cp}/{2}$ & if $\dg_{in}(\Zn)+\dg_{out}(\Zn)<{\Cp}$ & if $\dg_{out}(\Zn)<{\Cp}/2$ \\ \hline
 1.2 Toy example & 2 & 0  & 0 & 0  \\ \hline \hline
   2.1 Routing  & $=\Cp -\lambda(z)$ & $ \geq \Cp-\lambda(z)$ & $ \geq \Cp-2\lambda(z)$ & $\geq \Cp - \lambda(z)$ \\
    & & if  $\lambda(z) < \Cp/2$ & if $\lambda(z) < {\Cp}/{2}$ & if  $\lambda(z) < \Cp/2$ \\ \hline
 2.2 Toy example & $8/3$  & $8/3$ & $4/3$  & $8/$3  \\ \hline \hline
   3.1 Coding & $\geq \Cp -\lambda(z)$ \cite{cui10-1} & $\geq \Cp -\lambda(z)$ & $\geq \Cp -2 \lambda(z)$  &  $\geq \Cp -\lambda(z)$ \\   \hline
 3.2 Toy example & $3$  & $3$  & open  & $3$  \\ \hline
 	
    \end{tabular}

\caption{Here, $\lambda(z)$ is the optimal value of the variable $\lambda$ in LP1'. Eavesdropping: In the cockroach network that first describe in \cite{kosut09}, $1$ eavesdropped node can be regarded as $2$ eavesdropping links (since each node has $2$ incoming links). So, the best achievable rate for this example is $2$ by \cite{cai02secure}. In our routing scheme, the rate $\rate = 8/3$ is achievable for the cockroach network example -- see Figure~\ref{fig:illustration}. We Further show that the rate $\rate = 3$ is achievable in the cockroach network example if smart coding is allowed -- see Figure~\ref{fig:coding}. A more general achievable scheme is shown in \cite{cui10-1}. Localized Jamming: The rate for the cockroach network is $0$ if we use the scheme in \cite{jaggilangberg05isit} directly. The rate $\rate = 8/3$ is achievable for the cockroach network -- see the proof of Theorem~\ref{thm:ach:locjam} for the encoding process. The rate $\rate = 3$ is achievable in the example if non-linear coding is allowed -- see Figure~\ref{fig:coding:jam}. (Here the casual omniscient jamming has the same results as the localized jamming -- see \cite{jaggilangberg05isit}). Localized Eavesdropping and Jamming: The rate for the cockroach network is $0$ if we use the scheme in \cite{yao10netcod} directly. The rate $\rate = 4/3$ is achievable if routing in for the cockroach network -- see Remark~\ref{rk:thm:eavesjam} for the encoding process. The coding rate is not known for this case.}
\end{figure*}

We show that the adversarial nodes problem can be solved by routing scheme. The routing is provided by LP1'. We use the same encoding process as \citep{jaggilangberg05isit} in the localized jamming/localized eavesdropping and jamming/omniscient jamming cases. For the localized eavesdropping, we use Vandermonde matrix as the encoding matrix.
\howard{
\begin{theorem}
\label{thm:ach:eaves}
$\rate = \Cp - \lambda(z)$, where $\lambda(z)$ is obtained by an optimal solution from LP1', is achievable for localized eavesdropping.
\end{theorem}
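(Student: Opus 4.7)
The plan is to convert the LP1' solution into a Shamir-style secret sharing code carried on disjoint routing paths, using a Vandermonde encoder so that no set of $z$ nodes sees enough coded packets to extract any information about $\Msg$.

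First I would solve LP1' to obtain edge flows $\Flow(e)$ with $\sum_{e \in \Out{\src}} \Flow(e) = \Cp$ and $\sum_{e \in \In{\Zn}} \Flow(e) \leq \lambda(z)$ for every $\Zn \subseteq \inode$ of size at most $z$. Since an extreme-point optimum of the LP is rational, I can choose the generation length $\tau$ large enough that $\tau \Flow(e)$, $\tau(\Cp - \lambda(z))$, and $\tau \lambda(z)$ are all integers. Standard flow decomposition then produces $\tau \Cp$ unit $\src$-to-$\term$ paths $p_1,\ldots,p_{\tau\Cp}$ (counted with multiplicity), and the balance constraint \eqref{lp1:balance} guarantees that at most $\tau \lambda(z)$ of them pass through any adversarial set $\Zn_E$ with $|\Zn_E| \leq z$.

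Next, the encoder parses $\Msg$ as $\tau(\Cp - \lambda(z))$ information packets, samples $\tau \lambda(z)$ uniformly random key packets $\Key$, and applies a $(\tau\Cp) \times (\tau\Cp)$ Vandermonde matrix $\vanmtrx$ over $\Fq$, with $q > \tau \Cp$ and nonzero evaluation points, to the concatenation $(\Msg, \Key)$. The $j$-th resulting coded packet is routed along path $p_j$. Decoding is immediate: the terminal collects all $\tau \Cp$ coded packets across the generation and inverts the square Vandermonde matrix to recover $(\Msg, \Key)$, yielding zero probability of decoding error and thus \eqref{eq:dec}.

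The main obstacle is establishing perfect secrecy. For any $\Zn_E$ with $|\Zn_E| \leq z$, the adversary's observation is a linear function $\vanmtrx_{\Zn_E} (\Msg, \Key)^\top$ of the encoder input, where $\vanmtrx_{\Zn_E}$ consists of the at most $\tau \lambda(z)$ rows of $\vanmtrx$ corresponding to coded packets on paths that touch $\Zn_E$. The key step is to show that the $\tau \lambda(z)$ columns of $\vanmtrx_{\Zn_E}$ indexed by the key coordinates form a non-singular square submatrix; after factoring out a diagonal of evaluation-point powers, this submatrix is itself a square Vandermonde on distinct evaluation points and hence invertible. Consequently, for every fixed value of $\Msg$, the adversary's observation is a bijective function of the uniform key $\Key$, and is therefore uniformly distributed and independent of $\Msg$, giving $\mutual{\Msg}{\X(\In{\Zn_E})} = 0$ and verifying \eqref{eq:sec} with $\Delta = 0$. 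What remains is routine bookkeeping: matching path counts after scaling by $\tau$, choosing $q$ large enough that all needed Vandermonde submatrices are simultaneously invertible, and handling the quantifier over all feasible $\Zn_E$ by applying the same MDS argument uniformly to each such set.
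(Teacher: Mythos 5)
Your proof is correct and takes essentially the same route as the paper: scale the LP1' solution by $\tau$ to obtain integer path multiplicities, route $\tau\Cp$ coded symbols along those paths, apply a square Vandermonde encoder at the source to $(\Msg,\Key)$ with $\tau\lambda(z)$ uniformly random key symbols, and invert at the terminal. You are more explicit than the paper on the secrecy step (the paper merely asserts that the $\tau\lambda(z)$ random keys confuse the eavesdropper), and your full-row-rank argument on the key-column submatrix restricted to the eavesdropper's observed rows is precisely the MDS-type fact needed to justify $\mutual{\Msg}{\X(\In{\Zn_E})} = 0$.
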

We show that the achievable scheme for localized eavesdropping is optimal.
\begin{theorem}
\label{thm:conv:eaves}
The achievable scheme for localized eavesdropping is optimal among routing schemes.
\end{theorem}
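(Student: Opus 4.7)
The plan is to show that every routing scheme for the localized eavesdropping model has secure rate at most $\Cp - \lambda(z)$; I proceed in four steps.

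\textbf{Step 1 (flow extraction).} Given a routing scheme, averaging edge occupancy over the $\tau$ slots of a generation produces a function $F(e) \in [0,1]$, which automatically satisfies the link-capacity constraint of LP1.

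\textbf{Step 2 (single-path reduction).} If a source packet $Y_j$ reaches $\term$ via several parallel copies, I retain one path per packet and delete the others. The set of source packets observed at $\term$ is unchanged, so decodability \eqref{eq:dec} is preserved; the set of source packets observed at $\In{\Zn_E}$ can only shrink, so secrecy \eqref{eq:sec} continues to hold for every $\Zn_E$. After this reduction, $F$ obeys flow conservation at every internal node and decomposes into feasible LP1 path flows $\{F(p)\}_{p \in \Pset}$ with $\sum_p F(p) \leq \Cp$.

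\textbf{Step 3 (information-theoretic bound).} Fix $\Zn_E \subseteq \inode$ with $|\Zn_E|\leq z$, and let $T, Z \subseteq \{Y_1, \ldots, Y_\oblk\}$ denote the source packets visible at $\term$ and at $\In{\Zn_E}$, respectively. Combining Fano (from \eqref{eq:dec}) with the secrecy condition \eqref{eq:sec},
\[
\begin{aligned}
H(\Msg) &\leq I(\Msg; T, Z) + o(\oblk\blk) \\
&= I(\Msg; Z) + I(\Msg; T \mid Z) + o(\oblk\blk) \\
&\leq \Delta + H(T \mid Z) + o(\oblk\blk).
\end{aligned}
\]
Since $T \cap Z \subseteq Z$ as sets of source packets, $H(T \mid Z) \leq H(T \setminus Z) \leq |T \setminus Z| \cdot \blk \log q$. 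Normalizing and letting $\Delta, \epsilon_1 \to 0$ yields $\rate \leq |T \setminus Z|/\tau$ in the paper's rate units.

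\textbf{Step 4 (LP closure).} In the single-path routing, each packet in $T \setminus Z$ travels along a single $s$-$t$ path avoiding $\Zn_E$, so $|T \setminus Z|/\tau = \sum_p F(p) - \sum_{p:\,p \cap \Zn_E \neq \emptyset} F(p)$. Since the bound holds for every $\Zn_E$ with $|\Zn_E| \leq z$, taking the worst $\Zn_E$ and writing $\lambda_F := \max_{\Zn_E:\,|\Zn_E| \leq z} \sum_{p:\,p \cap \Zn_E \neq \emptyset} F(p)$ gives $\rate \leq \sum_p F(p) - \lambda_F$. But $(F, \lambda_F)$ is feasible for LP1 and LP1 maximizes exactly this objective, so $\rate \leq \Cp - \lambda(z)$.

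The hard part is Step 3. The entropy bound $H(T \setminus Z) \leq |T \setminus Z| \blk \log q$ exploits the fact that in a routing scheme every observed symbol is literally one of the $\oblk$ source packets in $\Fq^\blk$; this discrete, subset-level accounting is what makes the converse clean and is what prevents it from extending verbatim to general network coding. Careful Fano and $\Delta$-slack bookkeeping, along with the Step-2 reduction (without which the LP feasibility of $F$ in Step 4 can fail at replicating nodes), are the main technical subtleties that need to be handled.
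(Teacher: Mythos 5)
Your proof is correct in substance and takes a genuinely different route from the paper's. The paper also begins by stripping replication (your Step 2 matches its Step 1), but it then invokes a Slepian--Wolf compression to produce path random variables $\widehat{\Pvar}(j)$ that are pairwise functions of the original path variables, preserve decodability, respect edge capacities, and are \emph{near-independent}; the outer bound then falls out of a Fano-plus-secrecy chain of the form $\rate \leq \Cp - \entropy{\widehat{\Pvar}(\Zn)} - \epsilon' + \cdots$, with the entropy constraints matched against LP1 at the end. You bypass Slepian--Wolf entirely: after the single-path reduction you extract a fractional $s$--$t$ flow, bound $H(T \mid Z) \leq |T\setminus Z|\cdot\blk\log q$ by treating the packets as literal atoms of $\Fq^\blk$, and close by noting that the resulting per-slot count $\sum_p F(p) - \lambda_F$ is by construction a feasible value of LP1's objective. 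Your route is more combinatorial and arguably more transparent; the Slepian--Wolf route in the paper works directly with entropies and so handles correlations among source outputs without first forcing a clean subset structure, at the cost of an extra asymptotic (multi-generation) argument. Your closing remark correctly pinpoints why the counting argument is special to routing and would not transfer to in-network coding. Two details to tighten in a full write-up: your Step 2 must also delete packets that are injected by $\src$ but never reach $\term$ (dead-ends), since these contribute edge occupancy but no $s$--$t$ path, which would otherwise break the claim that $F$ decomposes into $\Pset$-paths; and the final normalization should be stated in one fixed rate convention (the paper itself is slightly inconsistent between $\size{\Mset}=q^{\rate\oblk\blk(1-\delta)}$ and ``$\tau\rate$ message packets''), with the bound carrying the extra $1/(1-\epsilon_1)$ and $+\Delta$ slack until the final limit, as the paper does.
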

Furthermore, we discovered the graphical properties of the network. The converse for localized eavesdropping against $1$ eavesdropped node can be shown by careful combine the information-theoretic inequalities from its graphical properties.
\begin{theorem}
\label{thm:ach:locjam}
$\rate = \Cp - \lambda(z)$, where $\lambda(z)$ is the variable of LP1', is achievable for localized jamming.
\end{theorem}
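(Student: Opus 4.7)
The plan is to combine an optimal flow from LP1' with the causal-adversary source coding scheme of~\cite{jaggilangberg05isit}. First I would take an optimal solution $(\{\Flow(e)\}_{e \in \edge}, \lambda(z))$ to LP1' and scale all flows by a large enough generation length $\tau$ so that the edge-flows become non-negative integers. Standard flow decomposition then produces a multiset of $\tau\Cp$ source-to-terminal paths $p_1,\ldots,p_{\tau\Cp}$, with the property that every edge is used at most $\tau$ times (so no edge is over-subscribed over a $\tau$-slot generation) and, by constraint~\eqref{lp1:balance}, for every $\Zn \subseteq \inode$ with $|\Zn| \leq z$, the number of decomposition paths meeting $\Zn$ is at most $\tau\lambda(z)$. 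The routing scheme is to treat these paths as $\tau\Cp$ ``virtual channels'': in each slot $i \in \{1,\ldots,\tau\}$ the source injects $\Cp$ packets, one per path-slot, and each non-jamming internal node simply forwards according to the decomposition.

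Next, I would reduce the problem to a classical causal-channel coding problem. From the terminal's viewpoint, the $\tau\Cp$ virtual channels act like $\Cp$ parallel noisy links across $\tau$ slots; any adversarial set $\Zn$ with $|\Zn|\leq z$ can corrupt at most $\tau\lambda(z)$ of these virtual channels causally (their jamming at time $i$ is a causal function of observations up to time $i$). On this reduced channel, I would plug in the source encoder of~\cite{jaggilangberg05isit}: the source encodes the message together with short random hashes revealed at the end of the generation, so that the terminal can first identify which of the $\Cp$ parallel channels were tampered with and then treat those positions as erasures. This scheme achieves rate $\Cp - \lambda(z)$ against a causal adversary, beating the non-causal Singleton-type rate $\Cp - 2\lambda(z)$, provided $\lambda(z) < \Cp/2$.

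The main obstacle is justifying that the reduction to the parallel-channel model is faithful, since multiple virtual paths may physically share a node in $\Zn$ and so the adversary's corruptions across paths are correlated, not independent. The argument must lean on two properties: (i)~the LP guarantee gives a \emph{worst-case} bound of $\lambda(z)$ on the fraction of virtual channels corrupted, regardless of how the adversary correlates errors across them, and (ii)~the hash/revelation structure of~\cite{jaggilangberg05isit} is designed to be sound against arbitrary causal adversarial functions of the observed transmissions, not merely against independent bit-flips. A secondary technical point is choosing $\tau$ as a common multiple of the denominators appearing in the LP optimum, and taking $q$ and $\blk$ large enough that the hash-based scheme decodes correctly with probability at least $1-\epsilon_1$, delivering the achievable rate $\rate = \Cp - \lambda(z)$ claimed in the theorem.
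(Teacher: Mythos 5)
Your proposal follows essentially the same route as the paper: derive a fractional flow from LP1', scale by a generation length $\tau$ to get an integer path decomposition with at most $\tau\lambda(z)$ paths through any eavesdropped set, route deterministically along those paths, and overlay the causal-jamming source code of~\cite{jaggilangberg05isit} (Vandermonde encoding with a per-packet hash seed and hash value, recovered by majority since $\lambda(z)<\Cp/2$, used to filter corrupted packets before inversion). The one small imprecision is the phrase ``hashes revealed at the end of the generation''; in the scheme actually used, the hash seed and value are appended to \emph{every} packet and recovered by majority voting, but this does not change the substance of the argument.
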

\begin{theorem}
\label{thm:ach:eavesjam}
$\rate = \Cp - 2 \lambda(z)$, where $\lambda(z)$ is the variable of LP1', is achievable for localized eavesdropping and jamming.
\end{theorem}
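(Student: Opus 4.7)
The plan is to combine the constructions already developed for Theorems~\ref{thm:ach:eaves} and \ref{thm:ach:locjam}. Begin by solving LP1' to obtain an optimal flow assignment $\Flow$ of total value $\Cp$ with parameter $\lambda(z)$, and decompose $\Flow$ into a weighted collection of $s$-$t$ paths $\Pset'$ such that every set $\Zn\subseteq\inode$ of at most $z$ nodes is touched by a total flow of at most $\lambda(z)$. Choose the generation length $\tau$ so that the decomposition is realizable by unit-rate routing along $\Cp\tau$ path-slots (as in the proof of Theorem~\ref{thm:ach:eaves}), and note that any set of $z$ jammed-and-eavesdropped nodes affects at most $\lambda(z)\tau$ of these slots, both on the incoming and on the outgoing sides.

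Next, construct the source encoder $\Enc(\src)$ by concatenating the secrecy encoder of Theorem~\ref{thm:ach:eaves} with the error-correcting encoder of Theorem~\ref{thm:ach:locjam}. Allocate $(\Cp-2\lambda(z))\tau$ packet slots to the message $\Msg$ and $\lambda(z)\tau$ slots to an independent uniform key $\Key$, and define $\Msg' \triangleq [\Msg;\Key]$ of length $(\Cp-\lambda(z))\tau$. Apply the Vandermonde-style MDS mixing of Theorem~\ref{thm:ach:eaves} to $\Msg'$ to produce $(\Cp-\lambda(z))\tau$ secrecy-protected symbols, then append $\lambda(z)\tau$ symbols of parity using the redundancy/hash scheme from the proof of Theorem~\ref{thm:ach:locjam} (derived from \cite{jaggilangberg05isit}), producing $\Cp\tau$ coded packets routed along the paths in $\Pset'$.

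Verification proceeds in two parts. For secrecy, any eavesdropping set $\Zn\subseteq\inode$ with $|\Zn|\le z$ observes at most $\lambda(z)\tau$ of the $\Cp\tau$ transmitted coded symbols. The Vandermonde/MDS property of the inner encoder, together with the $\lambda(z)\tau$ uniform symbols of $\Key$ mixed into every coded packet, forces $\condentropy{\Msg}{\X(\In{\Zn})}=\entropy{\Msg}$ exactly as in Theorem~\ref{thm:ach:eaves}, so \eqref{eq:sec} is satisfied with $\Delta=0$. For decodability, the same set $\Zn$ can corrupt at most $\lambda(z)\tau$ outgoing coded packets; the terminal uses the hash/parity symbols to identify and erase those packets (the standard list-decoding-plus-verification argument of Theorem~\ref{thm:ach:locjam}), after which the remaining $(\Cp-\lambda(z))\tau$ uncorrupted packets contain enough MDS evaluations to recover $\Msg'$ and hence $\Msg$, establishing \eqref{eq:dec}.

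The main obstacle is showing that the two layers do not interfere: the hash/parity information needed to identify the jammed paths is itself transmitted over the network and may be eavesdropped, so the outer error-correction layer must be designed not to leak additional information about $\Msg$, and the inner secrecy layer must still admit exact erasure decoding once $\lambda(z)\tau$ slots are removed (including slots that may carry key material). I expect to handle this by choosing the parity symbols as linear functions of the already-mixed $\Enc$-output, so that their joint distribution with any $\lambda(z)\tau$ observed coded symbols remains a deterministic function of fresh key entropy, while simultaneously keeping the surviving $(\Cp-\lambda(z))\tau$ coordinates of the Vandermonde codeword in general position -- invoking a large enough field size $q$ to guarantee that both the MDS and the identification matrices are full rank over every choice of the adversary's $z$ nodes.
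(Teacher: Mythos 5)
Your high-level allocation matches the paper's: of the $\Cp\tau$ packet slots, $\lambda(z)\tau$ are sacrificed for the random key (secrecy), another $\lambda(z)\tau$ for jamming redundancy, leaving a message rate of $\Cp-2\lambda(z)$. The routing is also the same (decompose the LP1' flow into paths over $\tau$ time slots). Where you diverge is in the encoder architecture: you propose a genuine \emph{two-layer concatenation} (inner secrecy Vandermonde of Theorem~\ref{thm:ach:eaves}, outer error-correcting/hash layer of Theorem~\ref{thm:ach:locjam}), whereas the paper (Remark~\ref{rk:thm:eavesjam}) simply \emph{reuses Theorem~\ref{thm:ach:locjam}'s encoder verbatim} and sets its input vector to $\x = (\vm \ \vk)$, with $\vm$ of size $\blk\rate''$ where $\rate'' = \lfloor(1-(\oblk+1)/\blk)(\oblk-2\tau\lambda)\rfloor$ and $\vk$ a uniformly random key of size $\blk\lfloor(1-(\oblk+1)/\blk)\tau\lambda\rfloor$. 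Because the \emph{single} Vandermonde matrix of the jamming scheme already performs all the linear mixing, the hash and parity symbols are automatically (by construction) linear in the key-masked output, and there is nothing to ``decouple.''

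This matters because the ``main obstacle'' you flag at the end is a real hole in \emph{your} formulation but a non-issue in the paper's. You write that you ``expect to handle this by choosing the parity symbols as linear functions of the already-mixed $\Enc$-output'' and invoke a large field; that is a plan, not a proof. You would still need to verify, for every choice of $z$ nodes, that the $\lambda(z)\tau$ slots observed by the eavesdropper, \emph{together with the hash/seed columns seen across all of them}, correspond to a submatrix whose kernel still contains the full key space --- and that the surviving $(\Cp-\lambda)\tau$ slots simultaneously form an invertible system for erasure decoding. The paper avoids having to argue this joint condition across two codes by collapsing to a single tall Vandermonde matrix, so the secrecy check is exactly the Theorem~\ref{thm:ach:eaves} coset-coding argument and the decodability check is exactly the Theorem~\ref{thm:ach:locjam} rank argument, applied to the same $\vanmtrx$. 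If you rewrite your construction so that the inner and outer encoders are literally the same Vandermonde map acting on $\x = (\vm\ \vk)$, your proof becomes the paper's and the gap disappears; as written, the final paragraph needs to be turned into an actual lemma.
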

\begin{theorem}
\label{thm:ach:omnijam}
$\rate = \Cp - \lambda(z)$, where $\lambda(z)$ is the variable of LP1', is achievable for omniscient jamming.
\end{theorem}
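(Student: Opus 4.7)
The plan is to obtain the causal-omniscient jammer scheme by augmenting the localized-jammer scheme of Theorem~\ref{thm:ach:locjam} with a delayed-key authentication mechanism in the spirit of~\cite{jaggilangberg05isit}. I will first take an optimal solution of LP1' and fix a routing pattern that carries $\Cp$ units of flow from $\src$ to $\term$ in such a way that any set $\Zn$ of at most $z$ internal nodes intersects at most $\lambda(z)$ of the flow. Consequently, a causal omniscient jammer restricted to $z$ nodes can corrupt at most $\lambda(z)$ of the $\Cp$ parallel streams, leaving at least $\Cp-\lambda(z)$ of them untouched. The problem therefore reduces to the identification problem in which the decoder receives $\Cp$ streams, at most $\lambda(z)$ of which are adversarial, and must locate and discard the corrupted streams in order to recover the $\Cp-\lambda(z)$ clean data streams injected by $\src$.

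I will handle this identification problem through a causal hash-and-reveal construction. The source commits to each data packet on each of the $\Cp$ paths by appending a short, vanishing-rate hash computed under a uniformly random key $\Key$, transmits the hashed packets throughout the generation, and only at the very end of the generation broadcasts $\Key$ over all $\Cp$ paths using a robust low-rate inner code so that $\Key$ survives jamming. Because the adversary is causal, during the entire data transmission phase it has no information about $\Key$ at the instant it must decide how to alter any given packet; a standard pairwise-independent hash analysis identical to that of~\cite{jaggilangberg05isit} then shows that any packet the adversary modifies fails its hash check with probability at least $1-\epsilon$ over the choice of $\Key$. The terminal, after recovering $\Key$, verifies every received packet, discards the (at most $\lambda(z)$) paths on which the check fails, and decodes $\Msg$ from the remaining clean paths. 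A union bound over the $\Cp$ paths makes the total error smaller than any prescribed $\epsilon_1>0$ once $\blk$ is sufficiently large, establishing the rate $\Cp-\lambda(z)$.

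The main obstacle will be delivering $\Key$ itself reliably to the terminal while simultaneously keeping its transmission of vanishing rate (so that the effective throughput remains $\Cp-\lambda(z)$) and robust against a causal omniscient adversary, who could otherwise concentrate all of its jamming power onto the reveal phase. I will address this by invoking the hierarchical ``small robust code for the key, large efficient code for the data'' construction of~\cite{jaggilangberg05isit}, which terminates after $O(1)$ recursive levels and contributes only $o(1)$ overhead in rate; once this component is in place, the remainder of the argument is a routine concentration estimate on the hash collision probability combined with the LP1'-based routing structure already used in the proof of Theorem~\ref{thm:ach:locjam}.
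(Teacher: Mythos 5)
Your proposal captures the right idea and uses the same underlying machinery as the paper (the causal hash-and-reveal construction of~\cite{jaggilangberg05isit} combined with LP1' routing), but it re-derives the scheme from scratch and misses the paper's cleaner observation. The paper's actual proof of Theorem~\ref{thm:ach:omnijam} is a one-line remark: \emph{the scheme and analysis are identical to those in the proof of Theorem~\ref{thm:ach:locjam}}. The crucial structural insight --- which your write-up gestures at but does not crystallize --- is that because the jammer is \emph{causal}, the hash seed $\rho$ (placed as the last symbol of each packet, or in your formulation revealed at generation's end) is unknown to the adversary at every moment he must commit a corruption, and this is true \emph{regardless} of whether his observations are localized or omniscient. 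Consequently the localized-jammer code and analysis transfer verbatim, and no new scheme needs to be constructed. Two smaller deviations are worth flagging. First, the paper delivers the seed $\rho$ by simply appending $[\D \;\; \rho]$ to every one of the $\Cp$ parallel streams and majority-decoding; your ``hierarchical small robust code for the key'' is an unnecessarily heavier mechanism for the same job, though not incorrect. Second, you never state the condition $\lambda(z) < \Cp/2$ under which the achievability result holds: it is exactly this condition that guarantees majority decoding (or any robust inner code) of the common $[\D\;\;\rho]$ header succeeds, and it appears explicitly in the paper's summary table. Both of these are gaps in precision rather than errors in the approach.
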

}

\section{Proofs}
\subsection{Localized Eavesdropping}

\noindent {\it Proof of Theorem~\ref{thm:ach:eaves}}:
By LP1', each path \howard{$p$ is assigned a} flow $\Flow(p)$. It is clear that $\Flow(p)$ is rational for any $p \in \Pset$ since all the coefficients in LP1' are rational. Let $\tau$ be the minimum positive integer such that $\tau \Flow(p) \in \mathbf{Z}^+$. One may consider the scaling factor is scaling the capacity of each link up to $\tau$. Or, one could also consider $\tau$ as the time in a generation. That is, there are $\Cp$ packets transmitted at time $i$ for $i \in \{ 1, 2, \ldots, \tau \}$ and there are $\oblk = \tau \Cp$ packets transmitted to terminal $\term$ in each section. Now, let us consider the following scheme with rate $\rate = \Cp - \lambda$.

{\bf Source:} Let $\vm = \left( m_1, \ldots, m_{\tau \rate} \right)$ be the message transmitted, and $\vk = \left( k_1, \ldots, k_{\tau \lambda} \right)$ be the {\it keys}. The keys are uniformly random over $\Fq$ which is not known to the eavesdropper. So, the messages and the keys are ``embedded" and transmitted over the network and the eavesdropper thus is confused by the random keys. Let $\vanmtrx$ a Vandermonde matrix with size $\oblk \times \oblk$, be the source encoder matrix. Let $\x = (\vm \mbox{ } \vk)^T$ and the information to be transmitted from $\src$ is $\vanmtrx \x$. So, each packet corresponds to an entry of $\vanmtrx \x$.

{\bf Intermediate Nodes:} The packets are transmitted via the routes given by LP1'.

{\bf Terminal:} At terminal $\term$, the terminal $\term$ simply multiplies $\vanmtrx^{-1}$ with the received information $\vanmtrx \x$. Hence, $\x$ is recovered.

For any $\Zn \subset \inode$, the total amount of flows passing through $\Zn$ is at most $\tau \lambda$. There are also $\tau \lambda$ uniform random numbers that are not known by the eavesdropper. Thus, the eavesdropper is not able to get any information of the original message no matter which set of $\Zn$ nodes he observes. Therefore, the rate $\rate = \Cp - \lambda$ is achievable by the above scheme. \endIEEEproof

\noindent {\it Proof of Theorem~\ref{thm:conv:eaves}}:

{\bf Step 1:} We first show that there is a routing scheme {\it without} replicating that performs at least as well as any routing scheme {\it with replicating}.\footnote{We defined replicating routing schemes as those in which an internal node transmits the same incoming packet at least twice on outgoing edges.} Suppose there is a node $v \in \inode$ that performs replicating. Consider the routing scheme obtained by removing all but one of the replicated packets from the network (keeping only one of those reaching the terminal, if there is one such, else removing all of the packets). Under this new routing scheme, the information received by the terminal still enables it to reconstruct as well as under the previous scheme. In addition, removing packets from the network can only improve the secrecy requirement. Sequentially removing all replicated packets thus results in a routing-without-replicating scheme with performance at least as good as the original scheme.

Next, we give a more nuanced argument to show that in fact, for an optimal routing scheme, even the packets leaving the source must be essentially (statistically) independent. Let $p_1, p_2, \ldots, p_k$ be all the paths from the source $\src$ to the terminal $\term$. Let $\Pvar(j)$ be the random variable transmitted on the path $p_j$. So, for the paths $p_j \ni e$, \howard{we have $\entropy{\Pvar(j), j:p_j \ni e} \leq 1$. We assume the secrecy $\mutual{\Msg}{\Pvar(\Zn)} \leq \epsilon_2$ and the probability of decoding error $\Pr_e = \Pr_{\Act, \Code}(\widehat{\Msg} \neq \Msg) \leq \epsilon_1$.} By the Slepian-Wolf Theorem~\cite{slepianwolf}, we can construct a new random variable $\widehat{\Pvar}(j)$ for each path $p_j$ from the source $\src$ to the terminal $\term$ with certain properties. Firstly, the set $\{\widehat{\Pvar}(j)\}$ still carries essentially all the information that the set or original random variables $\{{\Pvar}(j)\}$ carried, and hence the terminal can still decode $\Msg$. Second, each $\widehat{\Pvar}(j)$ is a function {\it only} of ${\Pvar}(j)$, and hence the new routing scheme divulges no more information to the eavesdropper than the original scheme (due to the data-processing inequality). Third, the individual entropies of each new random variable is no more than the entropy of the original random variable, hence the edge-capacity constraints are not violated by the new routing scheme. Finally, the joint entropy of the new random variables is essentially the same as the sums of their individual entropies. Specifically, for any $\epsilon' >0$, there is a sufficiently large $m$ ({\it number of generations}), such that
$\sum_{j=1}^k \entropy{\widehat{\Pvar}(j)} = \entropy{\Pvar(1)^m, \ldots, \Pvar(k)^m} + m\epsilon'$.
For each $j$, the specific choice of $\widehat{\Pvar}(j)$ that satisfies these constraints simultaneously corresponds to the output of the $j$-th Slepian-Wolf source encoder operating at any rate-point on the sum-rate constraint of Slepian-Wolf rate-region.

\howard{
{\bf Step 2:}
We now use the properties of the new routing scheme derived in Step 1 to argue that in fact the rate specified by the solution of LP1 is also an outer bound on the achievable rate for routing-only schemes.
	\begin{eqnarray}
		m \rate &=& \entropy{\Msg^m} \\	
		& \leq & \entropy{\Msg^m, \widehat{\Pvar}(\Zn)^m} \\	
		& \leq & \condentropy{\Msg^m}{\widehat{\Pvar}(\Zn)^m} + \mutual{\Msg^m}{\widehat{\Pvar}(\Zn)^m} \\
		& \leq & \condentropy{\Msg^m}{\widehat{\Pvar}(\Zn)^m} - \condentropy{\Msg^m}{\widehat{\Pvar}(\Zn)^m, \overline{\widehat{\Pvar}(\Zn)}^m} \nonumber \\
		& & + \condentropy{\Msg^m}{\widehat{\Pvar}(\Zn)^m, \overline{\widehat{\Pvar}(\Zn)}^m} + m \epsilon_1 \\
		& \leq & \condmutual{\Msg}{\overline{\widehat{\Pvar}(\Zn)}^m}{\widehat{\Pvar}(\Zn)^m} \nonumber \\
		& & + 1 + \epsilon m \rate + m \Delta \label{eq:fano} \\
		& \leq & \entropy{\overline{\widehat{\Pvar}(\Zn)}^m} + 1 + \epsilon m \rate + m \Delta \\
		&=& m \Cp - m \entropy{\widehat{\Pvar}(\Zn)} \nonumber \\
		& & - m \epsilon' + 1 + \epsilon m \rate + m \Delta
	\end{eqnarray}
	\noindent where $\overline{\widehat{\Pvar}(\Zn)}$ denotes the random variables $\{ \widehat{\Pvar}(1), \ldots, \widehat{\Pvar}(k) \} \setminus \widehat{\Pvar}(\Zn)$.
Inequality~(\ref{eq:fano}) holds by Fano's inequality, and the last equality holds due to the ``near-independence'' of $\widehat{\Pvar}(j)$, as argued in Step 1 (the remaining steps follow from standard information identities and inequalities).
Hence $\rate \leq \frac{1}{1 - \epsilon}\left[\Cp - \entropy{\widehat{\Pvar}(\Zn)} - \epsilon' + \frac{1}{m} + \Delta\right]$. But this entropy inequality must hold for each set $\Zn$. But these, along with the entropy inequalities constraining the rate on each edge to be at most $1$, match the corresponding achievable rate given by LP1. \endIEEEproof
}
{
\subsection{Alternate outer bound for $z=1$}

We now present an alternative proof technique for the outer bound on the rate in the scenario when the network has just a single node-based eavesdropper. This technique provides an interesting graphical characterization of optimal routing-based schemes. Unfortunately this technique, as presented, does not extend to the case when $z > 1$, nor when coding is allowed inside the network. Nonetheless, we are hopeful that one or both of these limitations may be overcome if our techniques are combined with a more careful analysis of structured information inequalities, such as those presented in Madiman-Tetali~\cite{madiman10}.

\begin{definition}[Node-cut]
A set of nodes $\ncut \subset \vertex$ is called a node-cut if after removing the nodes in $\ncut$ there does not exist any path from $s$ to $t$ in the network.
\end{definition}
Of particular interest are minimal node-cuts.
\begin{definition}[Minimal node-cut]
A set of nodes $\ncut \subset \vertex$ is called a minimal node-cut if $\ncut$ is a node-cut and no proper subset of $\ncut$ is a node-cut. The set of all minimal node-cuts is denoted by $\widehat{\ncut}$.
\end{definition}

We first show the existence of a minimal node-cut satisfying certain properties. Specifically, we show that each node in this node-cut must be either {\it capacity constrained} (the flow passing through the node is constrained by the capacities of incoming outgoing edges) or {\it secrecy constrained} (the flow passing through the node is constrained by the requirement that there be no information-leakage if that node is eavesdropped on).
Such a node-cut, combined with carefully chosen information inequalities, is used to obtain an information theoretic upper bound on the capacity of the network. The scheme in LP1 achieves this upper bound.

For a minimal node-cut $\widehat{\ncut}$ we define the following sets.
\begin{align}
\mathcal E(\widehat{\ncut})&\triangleq \left\{e=(u,v)\in \mathcal E: u,v \in \widehat{\ncut}\right\} \nonumber \\
\widehat{\ncut}_{\lambda}&\triangleq \left\{v \in \widehat{\ncut}: \sum_{p:p\ni v}f(p)={\lambda}\right\}\nonumber \\
\widehat{\ncut}_{\Cp}&\triangleq \lbrace v \in \widehat{\ncut}: v \not \in \widehat{\ncut}_{\lambda}, \sum_{p:p \ni v}f(p)=\nonumber \\
& \min\left\{|\{e \in \In{v}\setminus\mathcal E(\widehat{\ncut})\}|, |\{e \in \Out{v}\setminus\mathcal E(\widehat{\ncut})\}\right\} \nonumber \rbrace
\end{align}

\begin{lemma}\label{existncut}
For a given single-source single sink network $\graph$, there exists a minimal node-cut $\widehat{\ncut}$ such that
\begin{align}\label{eq:node-cut prop}
\widehat{\ncut} = \widehat{\ncut}_\lambda \cup \widehat{\ncut}_\Cp.
\end{align}
\end{lemma}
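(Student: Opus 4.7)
I would extract the desired minimal node-cut from a min-cut of an auxiliary node-split network, and then certify the two tightness conditions via saturation in an optimal LP1' flow. The high-level bridge is that LP1' is essentially a max-flow problem in which each internal node $v$ carries a ``capacity'' $\lambda^{\ast}$ on its through-flow, so classical max-flow min-cut on the node-split graph should expose the cut I need.

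\textbf{Steps 1 and 2: auxiliary network and candidate cut.} Fix an optimal primal solution $\{f(p)\}$ of LP1' with optimum value $\Cp - \lambda^{\ast}$. I would build $\graph'$ by splitting every $v \in \inode$ into $v^{-}, v^{+}$ joined by an intra-node edge of capacity $\lambda^{\ast}$, and replacing each $(u,w) \in \edge$ by $(u^{+}, w^{-})$ of capacity $1$ (leaving $\src$ and $\term$ unsplit). The LP1' solution induces an $\src$–$\term$ flow of value $\Cp$ in $\graph'$, and optimality of $\lambda^{\ast}$ forbids any larger flow. Max-flow min-cut then yields a cut $(S, \bar S)$ of $\graph'$ of value exactly $\Cp$; among all min-cuts I would take the canonical one closest to $\term$, obtained via residual-graph reachability. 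The candidate $\widehat{\ncut}$ is then the union of (i) nodes whose intra-node edge is cut and (ii) the heads of the inter-node cut edges. A short path-tracing argument shows $\widehat{\ncut}$ is indeed an $\src$–$\term$ node-cut in $\graph$.

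\textbf{Step 3: tightness and minimality.} For a node $v$ of type (i), saturation of its intra-node edge gives $\sum_{p \ni v} f(p) = \lambda^{\ast}$, so $v \in \widehat{\ncut}_{\lambda}$. For a head $w$ of type (ii), the ``closest to $\term$'' property forces all external incoming edges of $w$ to be saturated; by flow conservation the total flow through $w$ then attains the minimum in the definition of $\widehat{\ncut}_{\Cp}$, so $w \in \widehat{\ncut}_{\Cp}$. Finally I would iteratively delete any redundant node from $\widehat{\ncut}$. Each removal only moves edges out of $\edge(\widehat{\ncut})$, which preserves (or strengthens) the $\widehat{\ncut}_{\Cp}$ saturation condition on the surviving nodes, while $\widehat{\ncut}_{\lambda}$ membership is a per-node condition and is unaffected by changes to the set.

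\textbf{Main obstacle.} The chief difficulty is the self-referential definition of $\widehat{\ncut}_{\Cp}$, whose internal min depends on the current cut via $\edge(\widehat{\ncut})$: as pruning proceeds in Step~3 the set $\edge(\widehat{\ncut})$ shrinks, and tightness of each surviving node must be re-verified relative to the new set. I expect the ``closest to $\term$'' canonical choice in Step~2 is precisely what makes the required monotonicity work cleanly; if it does not, my fallback is a complementary-slackness argument on the dual of LP1', where the supports of the edge-capacity and secrecy dual variables pick out $\widehat{\ncut}_{\Cp}$ and $\widehat{\ncut}_{\lambda}$ directly, with minimality enforced by an extreme-point choice of the optimal dual.
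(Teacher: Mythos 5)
Your plan takes a genuinely different route from the paper's. The paper argues by contradiction and ``pushes'' a node-cut from the source toward the terminal: it starts from the node-cut consisting of the source's out-neighbors, observes that under the negation of the lemma every minimal node-cut must contain a loose node $u\notin\widehat{\ncut}_\lambda\cup\widehat{\ncut}_\Cp$, replaces such loose nodes by their downstream neighbors to obtain a new minimal node-cut, and iterates until a loose node adjacent to $\term$ appears. The loose nodes encountered trace out an $\src$--$\term$ path $p$ with $f(p)<1$, contradicting the LP1' constraint $\sum_{p\in\Pset}f(p)=\Cp$. Your plan instead constructs the cut explicitly from a max-flow min-cut computation on a node-split auxiliary graph with intra-node capacities $\lambda^{\ast}$; if it worked, that would be a more constructive and transparent alternative.

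However, the obstacle you flag yourself is real, and your proposed resolution is actually inverted. Pruning a node $u$ from $\widehat{\ncut}$ can only \emph{shrink} $\mathcal{E}(\widehat{\ncut})$, so for every surviving node $v$ the sets $\In{v}\setminus\mathcal{E}(\widehat{\ncut})$ and $\Out{v}\setminus\mathcal{E}(\widehat{\ncut})$ can only \emph{grow}; hence the quantity $\min\{|\In{v}\setminus\mathcal{E}(\widehat{\ncut})|,\,|\Out{v}\setminus\mathcal{E}(\widehat{\ncut})|\}$ is non-decreasing under pruning while $\sum_{p\ni v}f(p)$ is fixed. Pruning can therefore \emph{destroy} the $\widehat{\ncut}_\Cp$ tightness of a surviving node rather than preserve it, which is the opposite of what you claim. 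And pruning cannot be skipped: your candidate set (intra-node cut edges together with heads of inter-node cut edges) need not be a minimal node-cut of $\graph$ and can even contain $\term$. The ``closest to $\term$'' canonical cut is a sensible heuristic, but you give no argument that it avoids the difficulty, and the complementary-slackness fallback is entirely undeveloped. To turn this into a proof you would need either a concrete monotonicity argument using the residual structure of the canonical cut, or a direct proof that the min-cut extraction already yields a minimal node-cut with the required per-node saturations without any pruning step.
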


\begin{proof}
We prove the lemma by contradiction. Assume there does not exists a minimal node-cut in the given network with property \eqref{eq:node-cut prop}. Consider the minimal node-cut $\widehat{\ncut}=\{v: \exists e=(s,v) \in \mathcal E\}$ and define the set $\mathcal{U}(\widehat{\ncut})=\{u \in \widehat{\ncut}: u \not \in \widehat{\ncut}_\lambda \cup \widehat{\ncut}_\Cp\}$.
Then there exists a node $u \in N$ such that $u \not \in \widehat{\ncut}_\lambda \cup \widehat{\ncut}_\Cp$.

Now consider the set $tail(\Out{\mathcal U})\cup \widehat{\ncut} \setminus \mathcal U$ and choose any minimal node-cut $\widehat{\ncut}' \subseteq tail(\Out{\mathcal U})\cup \widehat{\ncut} \setminus \mathcal U$. By assumption, there must exist some non-empty set $\mathcal U(\widehat{\ncut}')=\{u \in \widehat{\ncut}': u \not \in \widehat{\ncut}_\lambda' \cup \widehat{\ncut}_\Cp'\}$. Repeat the process of finding new minimal node-cut until we find a node $w$ in a new node-cut $\widehat{\ncut}''$ such that there exist edge $(w,t)$ and $w \not \in \widehat{\ncut}_\lambda'' \cup \widehat{\ncut}_\Cp''$.

Note that the above process reveals existence of a path $p: s,v,\ldots,w,t$ such that $f(p)<1$, which implies $\sum_{p \in \Pset} f(p) < \Cp$, which is a contradiction.
\end{proof}

\noindent {\it Alternative proof of the outer bound.} By Lemma~\ref{existncut}, let $\widehat{\ncut}$ be the node-cut we consider in the network $\graph$. Note that for any node $v \in \widehat{\ncut}$, we have the following sequence of inequalities:
\begin{eqnarray}
	\rate &=& \entropy{\Msg} \\
		& \leq & \condentropy{\Msg}{\X(\In{\widehat{\ncut}} \setminus \edge(\widehat{\ncut}))} \nonumber \\
		& & + \mutual{\Msg}{\X(\In{\widehat{\ncut}} \setminus \edge(\widehat{\ncut})} \\
		&=& \mutual{\Msg}{\X(\In{\widehat{\ncut}} \setminus \edge(\widehat{\ncut})} \label{eq:decod}\\
		&=& \condmutual{\Msg}{\X(\In{\widehat{\ncut} \setminus \{ v \}} \setminus \edge(\widehat{\ncut})}{\X(\In{v} \setminus \edge(\widehat{\ncut}))} \nonumber \\
		& & + \mutual{\Msg}{\X(\In{v} \setminus \edge(\widehat{\ncut}))}  \\
		& \leq & \condentropy{\X(\In{\widehat{\ncut} \setminus \{ v \}} \setminus \edge(\widehat{\ncut})}{\X(\In{v} \setminus \edge(\widehat{\ncut}))} \nonumber \\
		& & \label{eq:secr} \\
		&=& \entropy{\X(\In{\widehat{\ncut}} \setminus \edge(\widehat{\ncut}))} \nonumber \\
		& & - \entropy{\X(\In{v} \setminus \edge(\widehat{\ncut}))}
\end{eqnarray}
\noindent Here \eqref{eq:decod} follows from the requirement that the message $\Msg$ be decodable from the network transmissions, \eqref{eq:secr} from the requirement that there be no information leakage, and the remaining are standard information identities and inequalities.

\noindent Summing up the above inequalities for every $v \in \widehat{\ncut}_\lambda$, we have
\begin{eqnarray}
	|\widehat{\ncut}_\lambda | \rate & \leq & |\widehat{\ncut}_\lambda | \entropy{\X(\In{\widehat{\ncut}} \setminus \edge(\widehat{\ncut}))} \nonumber \\
	& & - \sum_{v \in \widehat{\ncut}_\lambda} \entropy{\X(\In{v} \setminus \edge(\widehat{\ncut}))}
\end{eqnarray}
Note that
\begin{eqnarray}
	& & \sum_{v \in \widehat{\ncut}_\lambda} \entropy{\X(\In{v} \setminus \edge(\widehat{\ncut}))} \nonumber \\
	& & + \sum_{v \in \widehat{\ncut}_\Cp} \entropy{\X(\In{v} \setminus \edge(\widehat{\ncut}))}  \nonumber \\
	&\geq & \entropy{\X(\In{\widehat{\ncut}} \setminus \edge(\widehat{\ncut}))} \label{eq:conv-main}
\end{eqnarray}
So, we have
\begin{eqnarray}
	|\widehat{\ncut}_\lambda | \rate & \leq & |\widehat{\ncut}_\lambda | \entropy{\X(\In{\widehat{\ncut}} \setminus \edge(\widehat{\ncut}))} \nonumber \\
	& & - \sum_{v \in \widehat{\ncut}_\lambda} \entropy{\X(\In{v} \setminus \edge(\widehat{\ncut}))} \\
	& \leq & \left(|\widehat{\ncut}_\lambda | - 1\right) \entropy{\X(\In{\widehat{\ncut}} \setminus \edge(\widehat{\ncut}))} \nonumber \\
	& & + \sum_{v \in \widehat{\ncut}_\Cp} \entropy{\X(\In{v} \setminus \edge(\widehat{\ncut}))} \\
	&=& \left(|\widehat{\ncut}_\lambda | - 1\right) \Cp \nonumber \\
	& & + \sum_{v \in \widehat{\ncut}_\Cp} \left| \X(\In{v} \setminus \edge(\widehat{\ncut})) \right|
\end{eqnarray}
Therefore,
\begin{equation}
\label{eq:conv}
	\rate \leq \left( 1 - \frac{1}{|\widehat{\ncut}_\lambda |} \right) \Cp + \frac{1}{|\widehat{\ncut}_\lambda |} \sum_{v \in \widehat{\ncut}_\Cp} \left| \X(\In{v} \setminus \edge(\widehat{\ncut})) \right|
\end{equation}

Note that \eqref{eq:conv-main} is equivalent to $|\widehat{\ncut}_\lambda | \lambda + \sum_{v \in \widehat{\ncut}_\Cp} \left| \X(\In{v} \setminus \edge(\widehat{\ncut})) \right| \geq  \Cp$. Hence, $\rate \leq \Cp - \lambda$ can be verified by putting \eqref{eq:conv-main} into \eqref{eq:conv}. \endIEEEproof
}
}

\subsection{Localized Jamming}

\noindent {\it Proof of Theorem~\ref{thm:ach:locjam}}:

We use the same achievable scheme as \cite{jaggilangberg05isit} for the localized jamming scenario. Roughly speaking, each packet contains $3$ parts in this achievable scheme. That is, information about the message, a seed of hash function, and the value of the hash.

{\bf Source:} First, the source $\src$ fixes a number $\rate' = \lfloor \left( 1 - \frac{\oblk + 1}{\blk} \right)(\oblk - \tau \lambda) \rfloor$. Let the source encoder matrix be a Vandermonde matrix $\vanmtrx$ with size $(\blk - \oblk - 1) \oblk \times \blk \rate'$. Let $\x$ be the vector of the original message. The vector $\x$ is of dimension $\blk \rate'$. There are $\tau$ timeslots in one section, where $\tau$ is the minimum positive integer such that $\tau \Flow(p) \in \mathbb{Z}^+$. In each section, the source $\src$ is transmitting $\oblk = \tau \Cp$ packets to the terminal $\term$. More precisely, $\Cp$ packets are transmitted to the terminal $\term$ at time $i \in \{ 1, 2, \ldots, \tau \}$. Let us denote $p_1, p_2, \ldots, p_\oblk$ be the packets that the source $\src$ transmits to the terminal $\term$. We also denote the corresponding encoding matrix for the packet $p_j$ to be $\vanmtrx(p_j)$. The size of the matrix $\vanmtrx(p_j)$ equals $(\blk - \oblk - 1) \times \blk \rate'$. Also note that the concatenation of $\vanmtrx(p_j)$ for all $j \in \{1,2, \ldots, \oblk\}$ is exactly the encoding matrix $\vanmtrx$. Let $\rho$ be a random number that uniformly chosen in $\Fq$. Define $\T(p_j)$, $\U$ and $\D$ as follows,
\begin{equation}
	\T(p_j) = \vanmtrx(p_j) \x
\end{equation}
\begin{equation}
	\U = [1, \rho, \rho^2, \ldots, \rho^{\blk - \oblk - 1}]
\end{equation}
\begin{equation}
	\D = \U [\T(p_1) \mbox{ } \T(p_2) \mbox{ } \ldots \mbox{ } \T(p_\oblk)]
\end{equation}
Where $\T(p_j)$ is the vector that contains the information about the message, $\U$ is the hash function with respect to $\rho$, and $\D$ is the value of the hash. So, $\T(p_j)$ is the column vector of size $\blk - \oblk - 1$, $\D$ is a row vector of size $\oblk$. Let the whole packet $p_j$ to be $[\T(p_j) \mbox{ } \D \mbox{ } \rho]$. Clearly, the packet size is $\blk$, and the last symbol of the packets $\rho$ is the seed of the hash.

{\bf Intermediate Nodes:} The packets $p_j$ are transmitted from the source $\src$ to the terminal $\term$ follows the corresponding paths. That is, intermediate nodes $v \in \inode$ preform routing (not replicating) that is given by LP1'.

{\bf Terminal:} At terminal $\term$, the decoding procedure is the following. Let the terminal receives $[\widehat{\T}(p_j) \mbox{ } \widehat{\D} \mbox{ } \widehat{\rho}]$ for $j \in \{ 1, 2, \ldots, \oblk \}$. The terminal $\term$ first determines $[\D' \mbox{ } \rho']$ by choosing the majority of received packets in a section. Since $\rho'$ is now fixed, we have $\U'$ is also fixed. Then, the terminal $\term$ checks whether $\U' \widehat{\T}(p_j)$ equals the $j$-th symbol of $D'$. Denote this set of packets to be $\Pset_D$.

Next, the terminal $\term$ concatenates the matrices $\vanmtrx(p_j)$ for $p_j \in \Pset_D$ into a matrix, denoted by $\vanmtrx_D$. Note that
\begin{equation}
	\begin{pmatrix}
		\widehat{\T}(p_1') \mbox{ } \widehat{\T}(p_2') \mbox{ } \cdots \mbox{ } \widehat{\T}(p_{|\Pset_D|}')
	\end{pmatrix}^T = \vanmtrx_D \x
\end{equation}
where $p'$ corresponds to the packets in $\Pset_D$. So, $\x$ can be founded by inverting the matrix $\vanmtrx_D$. The probability of decoding error equals $1 - \blk \oblk 2^{-m}$ is shown in \cite{jaggilangberg05isit}, where $m$ is the field size parameter. \endIEEEproof


\begin{remark}
If nodes in $\vertex$ are not allowed to replicate incoming packets to outgoing links, the achievable rate is indeed optimal -- see~\cite{jaggilangberg05isit}. If nodes in $\vertex$ are allowed to replicate incoming packets to outgoing links, the achievable rate can be improved -- see Figure~\ref{fig:counter}. 
\end{remark}

\begin{figure}
  \centering
  \includegraphics[width=3in]{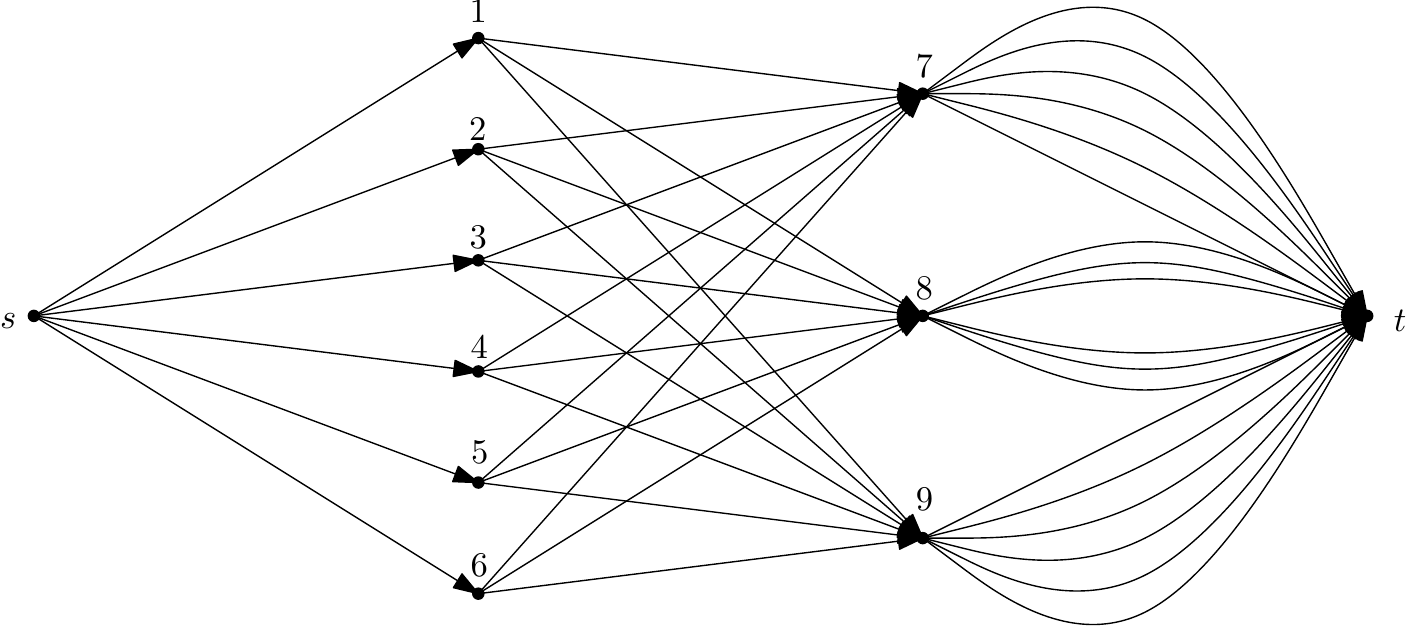}
  \caption{An example that shows that higher rates may be achievable in the {\bf Causal Omniscient Jamming} model than are achievable {\bf Localized Eavesdropping} model. This is contrary to the behavior one sees in the link-adversary case -- see, for instance,~\citep{sid-book}, and is thus somewhat surprising.
 The example requires nodes inside the network to perform replication -- as shown in the characterization of the capacity of the {\bf Localized Eavesdropping} model, in-network replication does not help improve the rate. Suppose one of the nodes in the network is a causal omniscient jammer.
Further suppose that the source $\src$ uses the same encoding as Theorem~\ref{thm:ach:locjam}. The nodes in the first layer replicate the packets and send out identical copies on each outgoing link. So, each node in the second layer receives the same set of packets, and each then forwards their outgoing packets. If one of the nodes from the second layer is jammed, terminal $\term$ can decode correctly without any rate loss by majority decoding. So, the optimal adversarial strategy is to jam a node in the first layer. Therefore, a rate $\rate = 5$ is achievable. However, solving LP1 shows us that the optimal rate achievable in the {\bf Localized Eavesdropping} model is $4$.}
  \label{fig:counter}
\end{figure}

\begin{remark}
\label{rk:thm:eavesjam}
For the proof of Theorem~\ref{thm:ach:eavesjam}, the only difference between the proof of Theorem~\ref{thm:ach:locjam} is that $\x = (\vm \mbox{ } \vk)$ where $\vm$ is the message with size $\blk \rate''$ in which $\rate'' = \lfloor \left( 1 - \frac{\oblk + 1}{\blk} \right)(\oblk - 2 \tau \lambda) \rfloor$ and $\vk$ is the key with size $\blk \lfloor \left( 1 - \frac{\oblk + 1}{\blk} \right)\tau \lambda \rfloor $.
\end{remark}
\begin{remark}
The proof of Theorem~\ref{thm:ach:omnijam} is the same as the proof of Theorem~\ref{thm:ach:locjam}.
\end{remark}

\subsection{Encoding Complexity versus Rate-optimal Loss}

Note that the size of encoding matrix is determined by $\oblk = \tau \Cp$. Since $\tau$ is the parameter determined by LP1', the encoding complexity is large when $\tau$ is also large. In this section, we will give the rate loss when we fix $\tau$.

\begin{lemma}
For $\tau'$ fixed, denote the corresponding rate to be $\rate'$. We have $\rate - \rate' < \frac{| \edge |}{\tau'}$.
\end{lemma}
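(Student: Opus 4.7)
The plan is to approximate the optimal LP1' path-flow solution by rounding each path flow down to an integer multiple of $1/\tau'$, so that the rounded flows correspond to a legitimate routing schedule of generation length exactly $\tau'$ via the scheme of Theorem~\ref{thm:ach:eaves}.

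First I would take an optimal LP1' solution $(F^*, \lambda^*)$ realizing the unconstrained rate $\rate = \Cp - \lambda^*$. Using the standard greedy flow-decomposition on the associated edge flow $F^*(e)=\sum_{p\ni e}F^*(p)$---each iteration saturates at least one edge in the residual and so the procedure terminates after at most $|\edge|$ rounds---I would replace $F^*$ by an equivalent path-flow solution supported on at most $|\edge|$ simple $s$-to-$t$ paths. Abusing notation, I continue to write this sparse solution as $F^*$.

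Next, I would define $F'(p):=\lfloor \tau' F^*(p)\rfloor/\tau'$ for each $p$ in the support, so that $\tau' F'(p)\in\mathbb{Z}_{\geq 0}$ and the scheme of Theorem~\ref{thm:ach:eaves} applies with generation length $\tau'$. Since $F'(p)\leq F^*(p)$ pointwise, every edge-capacity constraint is preserved, and for every $\Zn$ with $|\Zn|\leq z$ we have $\sum_{p:p\cap\Zn\neq\emptyset}F'(p)\leq \lambda^*$. Applying Theorem~\ref{thm:ach:eaves} to this feasible rounded solution therefore yields $\rate'\geq \sum_p F'(p)-\lambda^*$.

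Finally, I would bound the rounding loss: since $0\leq F^*(p)-F'(p)<1/\tau'$ holds for each of the at most $|\edge|$ paths in the support, $\sum_p F'(p) > \Cp - |\edge|/\tau'$. Combining, $\rate' > \Cp-|\edge|/\tau'-\lambda^* = \rate-|\edge|/\tau'$, which is the claimed strict inequality.

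The main obstacle I anticipate is the first step: the flow-decomposition procedure preserves edge flows but may reshuffle the way individual paths traverse nodes, so one must verify that the per-set touch-count $\sum_{p:p\cap\Zn\neq\emptyset}F^*(p)$ does not increase beyond $\lambda^*$ after decomposition. Since every decomposition path is simple, a path touching $\Zn$ must enter $\Zn$ via some edge of $\In{\Zn}$, so the touch-count is bounded above by $\sum_{e\in\In{\Zn}}F^*(e)$, a quantity that is invariant under the choice of decomposition and that does not exceed $\lambda^*$ at the optimum (by the equivalence between LP1' and LP2 discussed in the preliminaries).
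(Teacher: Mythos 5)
Your proof is correct and reaches the same bound, but it takes a genuinely different route from the paper's. The paper rounds down the \emph{edge capacities}: it scales the optimal LP2 edge flows by $\tau'$, takes floors, and then argues that since each of the $|\edge|$ edges loses at most one unit of capacity, the min-cut of the resulting integer-capacitated network drops by at most $|\edge|$; a fresh integer max-flow in the rounded network then serves as the $\tau'$-granular schedule, and since its edge flows are dominated by the original ones, the LP2 node-set constraints are automatically preserved with the same $\lambda$. You instead keep the flow fixed, pass to a \emph{sparse path decomposition}, and round the \emph{path flows}: the greedy decomposition gives at most $|\edge|$ paths (one edge is saturated per iteration), and rounding each down to a multiple of $1/\tau'$ costs strictly less than $1/\tau'$ per path, yielding the same $|\edge|/\tau'$ loss. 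Your version is more explicit about why the resulting schedule is still feasible with $\lambda \le \lambda^*$ -- a point the paper's proof handles only implicitly via edge-flow domination -- and you correctly anticipate the one subtlety, namely that a re-decomposition can reroute paths and alter the per-$\Zn$ touch-count, resolving it via the edge-level bound $\sum_{e \in \In{\Zn}} F(e)$. The only presentational wrinkle: you begin from an LP1' path optimum and then invoke the LP2 constraint $\sum_{e \in \In{\Zn}} F^*(e) \le \lambda^*$, which is a priori stronger than the LP1' constraint (a simple path can enter $\Zn$ through more than one edge, so the edge sum can exceed the path touch-count). It would be cleaner to start directly from an optimal LP2 \emph{edge} flow, as the paper does, decompose that, and round -- then the edge-level bound you rely on is available by construction rather than by appeal to the claimed LP1'/LP2 equivalence. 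With that small reordering, your argument is airtight.
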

\begin{proof}
Solving the {bf Linear Program 2} of network $\graph$ gives us the flow value $\Flow(e)$ on each link $e \in \edge$. Reduce the network $\graph$ by setting each link to be capacity $\Flow(e)$ and multiply $\tau'$ to each link of the network $\graph$. So, each link has capacity equals to $\tau' \Flow(e)$, denote this scaled network to be $\graph'$. Denote the network $\graph''$ to be the quantization on each link $e$ to be integer value, {\it i.e.}, taking $\lfloor \tau' \Flow(e) \rfloor$. So, the capacity on each link $e$ is reduced by a value at most $1$. Therefore, the capacity of the network $\graph''$ is reduced at most $| \edge |$ from the network $\graph'$. Therefore, the capacity of the network $\graph$ by fixing $\tau'$ reduced is at most $\frac{| \edge |}{\tau'}$. Hence, $\rate - \rate' < \frac{| \edge |}{\tau'}$.
\end{proof}

\begin{figure}
  \centering
  \includegraphics[width=3in]{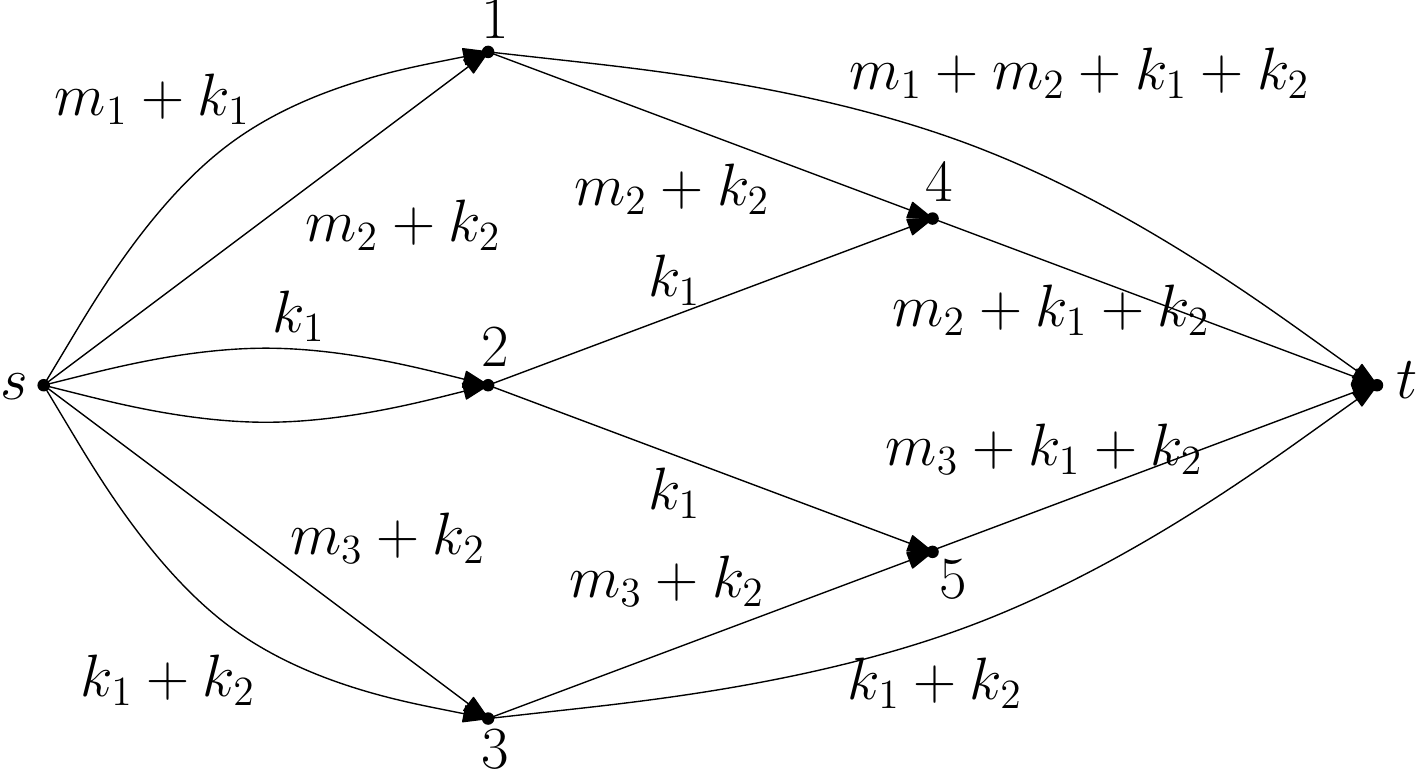}
  \caption{An example, for the cockroach network, of a ``careful coding'' scheme that beats any routing scheme in the {\bf Localized Eavesdropping} model. It can be verified by solving {\bf Linear Program 1} that the routing-only rate equals $8/3$. However, it can be verified that in the scenario that $z = 1$, {\it i.e.}, at most one node is eavesdropped on, the scheme outlined in this figure ensures that a rate $\rate$ of $3$ is perfectly securely achievable.}
  \label{fig:coding}
\end{figure}
\begin{figure}
  \centering
  \includegraphics[width=3in]{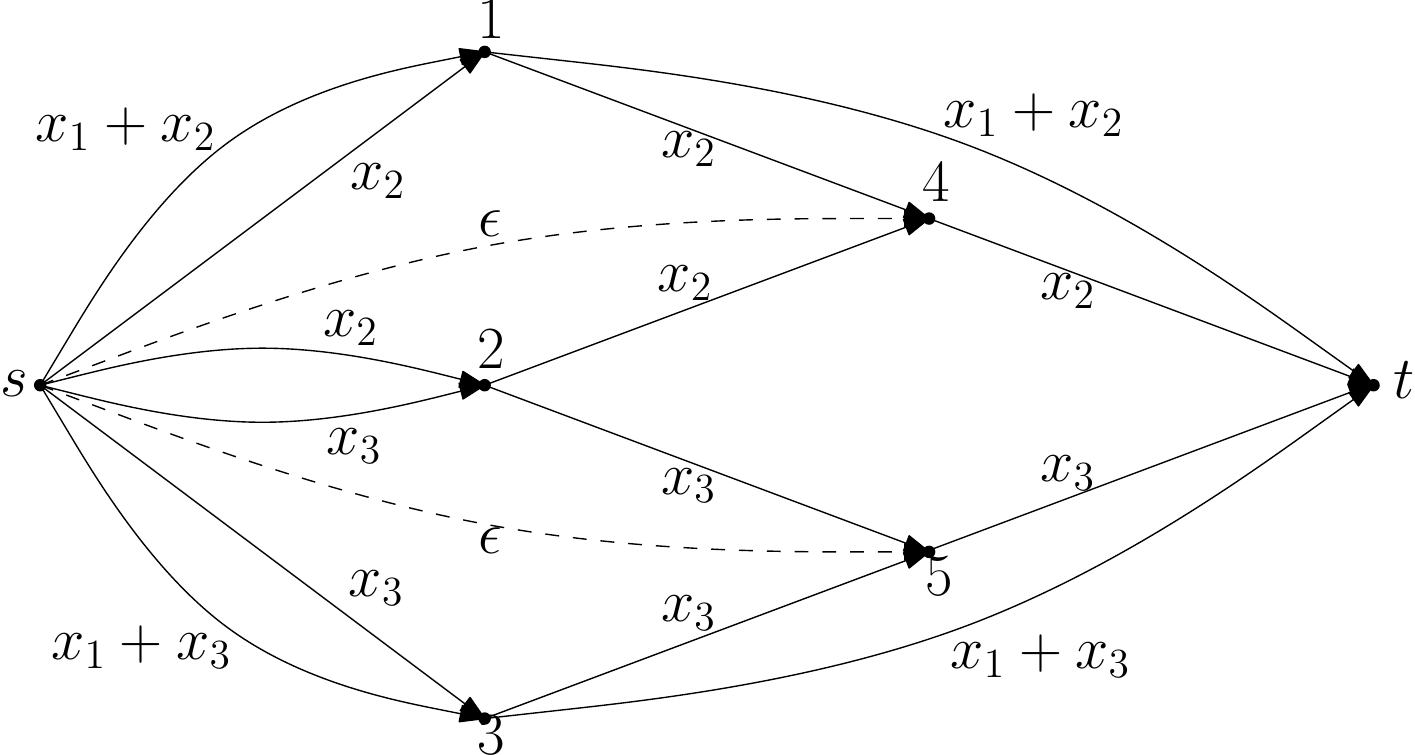}
  \caption{An example demonstrating that allowing coding operations inside the network can in general leads to capacities that are greater than are possible by the routing-only schemes we present in this work. Specifically, suppose the cockroach network was augmented by a link with ``small'' capacity to each of nodes $4$ and $5$. In this case, as in~\cite{jaggilangberg05isit} the (honest) source can send check-sums of the packets that should have reached each of nodes $4$ and $5$ via other routes in previous generations -- if these do not match the packets actually received by those nodes, they can discard these packets and forward the ``useful'' packets, leading to an achievable rate of $3$. In contrast, our routing schemes can achieve a rate of at most $8/3 + \epsilon$.}
  \label{fig:coding:jam}
\end{figure}

\section{Beyond routing}
\label{sec:beyroute}
In this Section we demonstrate that carefully chosen network codes can indeed outperform many of the routing-only schemes presented as some of the main results of this work. However, the complexity of designing and implementing these schemes is in general much higher than that of the routing schemes we focus on. Also, a
complete characterization the optimal throughout of such schemes is still open, and is thus, in the main, left open in this work.
\begin{figure}
  \centering
  \includegraphics[width=3in]{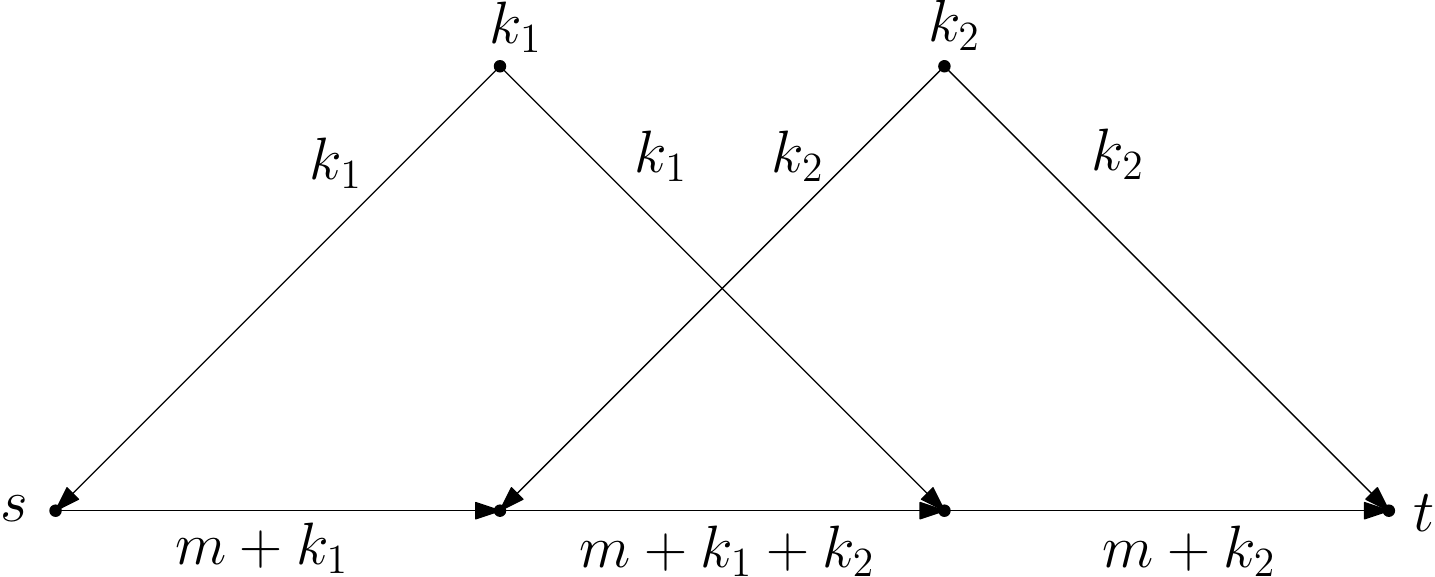}
  \caption{An example demonstrating that in some scenarios, if nodes inside the network are allowed to inject randomness, higher rates can be achieved than if this is not allowed (this observation was previously made in~\cite{cui10-1} -- we repeat it here with a simpler example).}
  \label{fig:int_rand}
\end{figure}

\bibliographystyle{IEEEtran}
\bibliography{nwc}

\end{document}